\numberwithin{equation}{section}
\newtheorem{theorem}{Theorem}[section]
\newtheorem{prop}[theorem]{Proposition}
\newtheorem{lem}[theorem]{Lemma}
\newtheorem{example}[theorem]{Example}
\theoremstyle{remark}
\newtheorem{remark}[theorem]{Remark}
\newenvironment{subproof}[1][\proofname]
{%
  \begin{proof}[#1]%
}
{%
  \end{proof}%
}
\newcommand{\cA}{\mathcal A}
\newcommand{\cC}{\mathcal C}
\newcommand{\cE}{\mathcal E}
\newcommand{\cF}{\mathcal F}
\newcommand{\cG}{\mathcal G}
\newcommand{\cH}{\mathcal H}
\newcommand{\cK}{\mathcal K}
\newcommand{\cL}{\mathcal L}
\newcommand{\cM}{\mathcal M}
\newcommand{\cN}{\mathcal N}
\newcommand{\cP}{\mathcal P}
\newcommand{\cS}{\mathcal S}
\newcommand{\cX}{\mathcal X}
\newcommand{\sL}{\mathsf{L}}
\def\R{\mathbb{R}}
\def\bE{\mathbb{E}}
\def\N{\mathbb{N}}
\def\min{{\rm min}}
\def\max{{\rm max}}
\def\Inp{I_{\rm np}}
\def\Ip{I_{\rm p}}
\def\Oi{\Omega_{{\rm i}}} 
\def\Oe{\Omega_{{\rm e}}}  
\def\Ot{\Omega_{{\rm t}}}
\def\um{u_{{\rm min}}}
\def\umn{u_{{\rm min};n}}
\def\sv{\mathsf{v}}
\def\a{\mathsf{a}} 
\def\Ipd{I_{{\rm p},md}}
\def\q{\mathsf{q}}
\def\srE{\mathscr{E}}
\def\Oig{\Omega_{{\rm i},\Gamma}} 
\def\Oeg{\Omega_{{\rm e},\Gamma}}  
\def\Otg{\Omega_{{\rm t},\Gamma}}
\def\Oign{\Omega_{{\rm i},\Gamma_n}} 
\def\Oegn{\Omega_{{\rm e},\Gamma_n}}
\begin{document}

\title[ensemble average energy and solute-solvent interfacial fluctuations]{Modeling and analysis of ensemble average solvation energy and solute-solvent interfacial fluctuations}

\author[]{
Yuanzhen Shao$^{* \dagger}$,
Zhan Chen$^{**}$, 
and
Shan Zhao$^{***}$ 
}

\thanks{$^{\dagger}$Corresponding author.}

\thanks{$^{*}$Department of Mathematics, The University of Alabama, Tuscaloosa, AL, USA.
\texttt{yshao8@ua.edu}}

\thanks{$^{**}$Department of Mathematical Sciences, Georgia Southern University, GA, USA.
\texttt{zchen@georgiasouthern.edu}}

\thanks{$^{***}$Department of Mathematics, The University of Alabama, Tuscaloosa, AL, USA.
\texttt{szhao@ua.edu}}

\normalsize

\begin{abstract}
Variational implicit solvation models (VISM) have gained extensive popularity in the molecular-level solvation analysis of biological systems due to their cost-effectiveness and satisfactory accuracy.
Central in the construction of VISM is an interface separating the solute and the solvent. 
However, traditional sharp-interface VISMs fall short in adequately representing the inherent randomness of the solute-solvent interface, a consequence of thermodynamic fluctuations within the solute-solvent system.
Given that experimentally observable quantities are ensemble-averaged, the computation of the ensemble average solvation energy (EASE)-the averaged solvation energy across all thermodynamic microscopic states-emerges as a key metric for reflecting thermodynamic fluctuations during solvation processes.
This study introduces a novel approach to calculating the EASE.
We devise two diffuse-interface VISMs: one within the classic Poisson-Boltzmann (PB) framework and another within the framework of size-modified PB theory, accounting for the finite-size effects. 
The construction of these models relies on  a new diffuse interface definition $u(x)$, which represents  the probability of a point  $x $  found in the solute phase among all microstates.
Drawing upon principles of statistical mechanics and geometric measure theory, we rigorously demonstrate that the proposed models effectively capture EASE during the solvation process. 
Moreover, preliminary analyses indicate that the size-modified EASE functional surpasses its counterpart based on classic PB theory across various analytic aspects.
Our work is the first step towards calculating EASE   through the utilization of diffuse-interface VISM.
\end{abstract}

\subjclass[2020]{Primary: 49Q10; Secondary: 35J20;   92C40}
\keywords{Biomolecule solvation, Poisson-Boltzmann, Variational implicit solvation model, Ensemble average solvation energy, Finite size effect}

\maketitle

\section{Introduction}\label{Section: introduction}

In the quantitative analysis of biological processes, the complex  interactions between the solute and solvent  are typically described by solvation energies (or closely related quantities):
the free energy of transferring the solute (e.g. biomolecules, such as proteins, DNA, RNA) from the vacuum to a solvent environment of interest (e.g.  water at a certain ionic strength). 
There are two major approaches for solvation energy calculations: explicit solvent models and implicit solvent models \cite{doi:10.1021/ct400065j}.
Explicit models, treating both the solute and the solvent as individual molecules, are too computationally expensive for large  solute-solvent systems, such as the solvation of macromolecules in ionic environments.
In contrast, implicit models, by averaging the effect of solvent phase as continuum media \cite{Feig:2004b,Baker:2005,Boschitsch:2004, Baker10037, BOTELLOSMITH2013274, doi:10.1021/jp7101012, https://doi.org/10.1002/jcc.23033}, are much more efficient and thus are able to handle much larger systems \cite{Baker10037, doi:10.1063/1.4745084,Grochowski:2007,Lamm:2003,Fogolari:2002,Tjong:2007b,Mongan:2007,Grant:2007}.

An inevitable prerequisite for describing the solvation energy in implicit solvent models is an interface separating the discrete solute and the continuum solvent domains.
All of the physical properties related to solvation processes, including biomolecular surface area, biomolecular cavitation volume, p$K_a$ value and  electrostatic free energy, are very sensitive to the interface definition \cite{Dong:2006,Swanson:2005a,Wagoner:2006}. 
There are a number of different surface definitions, which include the van der Waals  surface, the solvent excluded surface   and the solvent accessible surface.
These surface definitions have found many successful applications in biomolecular modeling \cite{Spolar:1989,Crowley:2005,Kuhn:1992,Dragan:2004}.
However, these predetermined interfaces are {\em ad hoc} partitions and thus either non-negligibly overestimate or underestimate the solvation free energies\cite{Wagoner:2006}. 
Moreover, none of them takes into account the minimization of interfacial free energies during  the  equilibrium solvation.

Variational implicit solvation models (VISM) stand out as a successful approach to compute the disposition of an interface separating the  solute and the solvent \cite{Wei:2005,Bates:2008,Cheng:2007e,CChen:2009, MR2719171, Zhou2020curvature,MR3740372}.
In a VISM, the desired   interface profile is obtained by optimizing a solvation energy functional coupling 
the discrete description of the solute and the continuum description of the solvent in addition to polar and non-polar interactions.

However, traditional sharp-interface VISMs have limitations in capturing the inherent randomness of the solute-solvent boundary. This randomness arises from factors such as atomic vibrations and thermodynamic fluctuations. In practical applications, ions, solutes, and solvent molecules are not rigid entities; they undergo small or significant conformational changes. 
The disposition of the solute-solvent interface is influenced by both the structure of the solute molecule and the surrounding solvent configuration.
Due to thermodynamic  fluctuations, when a fixed  solute molecular  structure being considered, solvent molecules and ions   can still adopt   different configurations \cite{Ball2003}, equivalent to forming different microstates in the language of statistical mechanics.
This highlights that the position of the solute-solvent interface is not unique, and a single, fixed configuration does not capture the full range of possible solute-solvent interactions.

On the other hand, experimentally observable quantities are ensemble averaged.
Utilizing the energy computed from a single microstate to predict the averaged energy from all microstates is inherently prone to inaccuracy.
Indeed, studies show that disregarding thermodynamic  fluctuations during the solvation process  can cause severe errors in predictions
of solvation free energies  \cite{doi:10.1021/jp0764384}.
Therefore, it is of imminent practical importance to develop a solvation model capable of calculating
ensemble average solvation energy (EASE) with thermodynamic fluctuations being taken into account.

According to statistical mechanics,  the ensemble  average of a quantity is the  weighted average (by means of the Boltzmann weight) of its values  among all microstates.
However, in practice, it is a challenging task to directly compute the EASE of a solute-solvent system by means of this definition.
If one only samples a handful of random realizations, it is likely that one can  only  capture a outlaw behavior rather than the meaningful average behavior of the stochastic regime.
In contrast,  a tremendous sampling of random realizations, although delivers a more accurate approximation of EASE, is computationally too expensive, sometimes even unaffortable.

To address this dilemma, we provide an alternative approach to the computation of EASE in Section~\ref{Section: ensemble average solvation functional}.
More precisely, we show that instead of computing and averaging the solvation energies among all microstates, one should characterize the ``mean behavior" of the solute-solvent interfaces.
This gives rise to a novel  ``interface" profile  $u:\Omega\to [0,1]$ such  that $u(x)$ represents the probability of a point  $x\in \Omega$  found in the solute phase among all microstates.

Based on this insight, we rigorously demonstrate that the EASE of a solute-solvent system can be effectively captured by using a  VISM  defined in terms of this ``interface" profile $u$. This modeling paradigm is highly flexible, enabling us to incorporate various considerations into EASE modeling.

As an illustrative example, we develop two EASE functionals, one within the framework of classic Poisson-Boltzmann (PB) theory and the other within the size-modified PB theory framework to account for finite size effects. These proposed models have the potential to significantly expedite the computation of EASE. They achieve this by using a single diffuse-interface profile instead of numerous sharp interfaces in various microstates, making the computation more efficient and computationally accessible.


The proposed models introduced in this work fall within the category of diffuse-interface VISM     \cite{MR3740372, MR3396402, Zhao_2013_Phase, MR2719171,PhysRevLett.96.087802, Wei2010, Wei_2016_differential,Wang_2015_Parameter,Zhao_2011_Pseudo}.
This family of models replaces the traditional sharp solute-solvent interface with a diffuse-interface profile, denoted as $u:\Omega\to \R$. 
Notably, our EASE models, as described in \eqref{ensemble average total energy type II} and \eqref{ensemble average solvation energy-steric}, have strong connections with one of the most widely employed diffuse-interface VISMs, the Geometric Flow-Based VISM (GFBVISM)  \cite{MR2719171,PhysRevLett.96.087802, Wei2010, Wei_2016_differential,Wang_2015_Parameter,Zhao_2011_Pseudo}.

Furthermore, our work can be viewed as an enhancement of GFBVISM in several crucial aspects.

First, our research provides a rigorous mathematical foundation for the physical interpretation of the diffuse-interface profile $u$ and the energy predictions generated by our models and GFBVISM. This clarification establishes a link between the sharp-interface and diffuse-interface VISM models, demonstrating that the diffuse-interface model indeed computes ``mean" energies consistent with sharp-interface models.

Secondly, we have incorporated two  physical constraints into the formulations of \eqref{ensemble average total energy type II} and \eqref{ensemble average solvation energy-steric}. These constraints effectively  eliminate the ill-posed issues associated with GFBVISM and guarantee  the optimal diffuse-interface profiles align with the physical principles.

Lastly, our first model,  \eqref{ensemble average total energy type II}, corrects the formulation of the polar component of GFBVISM, while the second model,  \eqref{ensemble average solvation energy-steric}, builds upon this correction by incorporating finite size effects into the model. This results in a more accurate representation of the solvation process, which is a significant improvement over the original GFBVISM formulation.

The rest of this paper is organized as follows. 
Section~\ref{Section: Solvation model} provides an introduction to one of the most widely used formulations of sharp-interface VISMs.
In Section~\ref{Section: ensemble average solvation functional}, we present the development of two Ensemble Average Solvation Energy (EASE) functionals within the frameworks of classic Poisson-Boltzmann theory and the size-modified Poisson-Boltzmann theory, taking into account finite size effects.
Section~\ref{Secion: analysis of ensemble average energy} comprises preliminary analyses of the proposed EASE models, with a particular emphasis on the size-modified EASE functional.
In Appendix~\ref{Appendix A}, we gather relevant information and facts about a class of strictly convex functionals related to the polar portion of EASE.
Appendix~\ref{Appendix B} offers a brief comparative analysis between our newly proposed models and the Geometric Flow-Based Variational Implicit Solvent Model (GFBVISM).
Finally, in Section~\ref{Section: Conclusion}, we draw conclusions and summarize the key findings and implications of our research.



\section{Ensemble Averaged Solvation Energy}\label{Section: Solvation model}

\noindent
{\bf List of Notations:} 
Given any open sets $U$ and $\Omega$,
$U\Subset \Omega$ means that $\overline{U}\subset \Omega$ and $\overline{U}$ stands for the closure of $U$.
We denote by $\cL^N$ and $\cH^{N-1}$ the $N-$dimensional Lebesgue measure and the $(N-1)-$dimensional Hausdorff measure, respectively.
For any $1\leq p \leq \infty$,   $p' $ is the H\"older conjugate of $p$. 
The phrase l.s.c is the abbreviation of lower semi-continuous.
$\chi_E$ always denotes the characteristic function of a set $E$.
\\
For any two Banach spaces $X$ and $Y$, the notation $\cL(X,Y)$ stands for the set of all bounded linear operators from $X$ to $Y$ and $\cL(X):=\cL(X,X)$. 


\subsection{Background}\label{Section: background}
We consider a solute-solvent system with a fixed biomolecular structure contained in a    bounded Lipschitz domain $\Omega \subset \R^3$ by using a (statistical) grand canonical ensemble.
The temperature, chemical potentials and volume of the system are kept constant.
Suppose that   there are $N_c$ ion species   in $\Omega$ and the system contains $N_c +2$ types of particles, i.e. the solute and solvent molecules and $N_c$ ion species.
We assume that the solute atoms are  centered at   $x_1, \cdots ,  x_{N_m} \in \Omega$.

\noindent
Define a probability space $(\cS, \cF,\cP)$, where 
the sample	 space $\cS=\{\cS_\alpha: \alpha\in A \}$ denotes the set of   all possible states of the system with $A$ being the index set of the states. $\cF$ is a $\sigma$-algebra of $\cS$ and $\cP$ is the probability measure.
Because a biomolecular structure is fixed, each state $\cS_\alpha$   is uniquely determined by a function
\begin{equation}
\label{fn cC}
\cC_\alpha\in L^1(\Omega; \R^{N_c+1}) : \, \,  \cC_\alpha(x)= (\cC_{\alpha,0}(x), \cC_{\alpha,1}(x), \cdots, \cC_{\alpha, N_c }(x)),
\end{equation}
where   $\cC_{\alpha,j}(x)$ is the (number) concentration of  the   $j$-th ion species at $x$ for $j=1,\cdots, N_c$, or the (number) concentration of solvent molecules for $j=0$.
In state $\cS_\alpha$, a   point $x\in \Omega$ is    in the solute phase   if $\cC_\alpha(x)=(0,\cdots,0)$; otherwise in the solvent phase.
A point that is not occupied by any particle is typically considered to be in the vacuum. However, due to the similar dielectric constants of the solute and the vacuum, it is reasonable to assume that such points can be treated as being in the solute phase.

\noindent
It is important to note that different states can result in the same solute and solvent phases. For instance, consider the scenario where a solvent molecule and an ion are interchanged, which are originally located at different positions, within a given state.
Allowing for a slight abuse of notation, we refer to the subset of states within $\cS$ that share the same solute and solvent phases as a {\em microstate} of the solute-solvent system.
Assume that the system undergoes $K$ microstates: $ \cM_1,\cdots, \cM_K \in \cF$; and each $\cM_k$ occurs with a probability $p_k= \cP(\cM_k)$.
For notational brevity, we put $\cK=\{ 1,2,\cdots,K\}$. 
This leads to a decomposition    $\cS=\bigsqcup\limits_{k\in \cK} \cM_k$; and thus $\sum_{k\in \cK}p_k=1$. 

\noindent
We define several  random variables that will be extensively used in this article.
\begin{itemize}
\item $X:\cS\to \cK$, where $X=\sum_{k\in \cK} k \chi_{\cM_k}$. Thus $X(\cS_\alpha)$ indicates the microstate that   $\cS_\alpha$ belongs to.
\item $\cC_j : \cS\to  L^1(\Omega)$, where $\cC_j( \cS_\alpha) =\cC_{\alpha,j} $ with $j\in \{0,1,\cdots,N_c\}$. Thus when $j\in \{1,\cdots,N_c\}$, $\cC_j(\cS_\alpha)(x)$   is the  (number) concentration   of the $j$-th ion species at   $x\in \Omega $ in   $\cS_\alpha$; $\cC_0(\cS_\alpha)(x)$   is the  (number) concentration   of the  solvent molecule at   $x\in \Omega $ in   $\cS_\alpha$. 
\end{itemize}
Let $Z$ be a Banach space. The ensemble average of a random variable $Y:\cS \to Z$ (in $\cS$)  is defined as
\begin{equation}
\label{def ensemble average}
\langle Y \rangle := \bE(Y) = \bE[ \bE(Y|X=k)]=\sum_{k\in \cK} p_k  \bE(Y|X=k).
\end{equation}
Similarly, if one considers $\cM_k$ as a (statistic) ensemble, the ensemble average of $Y$ in $\cM_k$ is defined by
\begin{equation}
\label{def ensemble average microstate}
\langle Y \rangle_k := \bE(Y| X=k)  .
\end{equation}
Throughout the remainder of this article, unless explicitly stated otherwise, the ensemble average of a quantity $Y$ always refers to the object defined by \eqref{def ensemble average}.


\subsection{A Sharp Interface VISM}\label{Section: Sharp interface}

In this section, we will introduce the solvation energy defined in a microstate $\cM_k$ with a fixed sharp solute-solvent interface.
Assume that in  microstate $\cM_k$, the solute phase is represented by a Caccioppoli subset $D_k \Subset \Omega$, or equivalently, $U_k=\Omega\setminus \overline{D}_k$ is the solvent phase.
Then the solute-solvent interface  is given by $ \partial D_k$.
See Figure~\ref{fig:sharp}(A) for an illustration.
\begin{figure}[hbt!]
\begin{subfigure}{.5\textwidth}
  \centering
  \includegraphics[scale=.24]{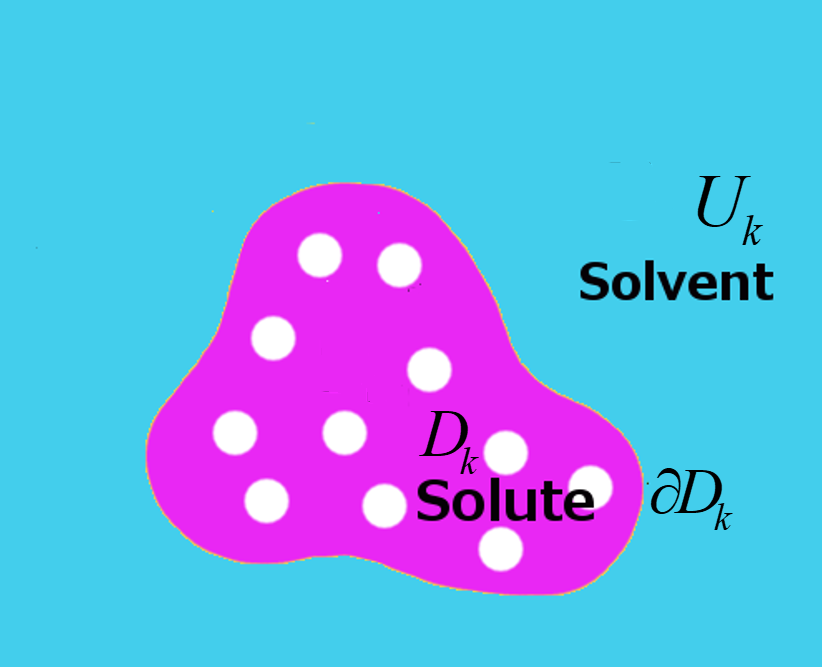}
   \caption{}
\end{subfigure}%
\begin{subfigure}{.5\textwidth}
  \centering
\includegraphics[scale=.4]{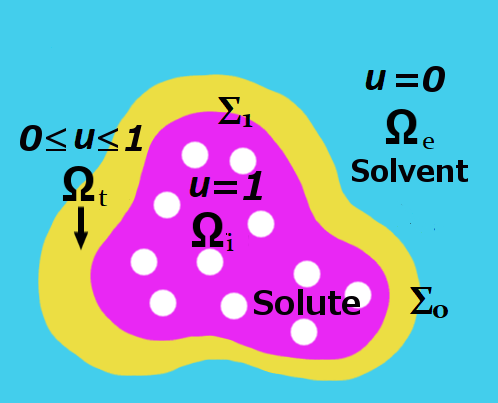}
  \caption{}
\end{subfigure}
\caption{\Small (A)  Illustration of a microstate $\cM_k$: $D_k$ is the solute   region; $U_k$ is the solvent  region; $\partial D_k$ is the solute-solvent interface;   (B) Domain decomposition for a grand canonical ensemble: $\Oi$ is the region   occupied by the solute in all microstates; $\Oe$ denotes the region   occupied by the solvent in all microstates; $\Ot$ denotes the transition region where $0\leq u \leq 1$.}
\label{fig:sharp}
\end{figure}

In this article,   solute atoms are treated as hard spheres.
More precisely, we consider the atom centered at $x_i$, $i=1,\cdots,N_m$, as a sphere  of radius $\sigma_i>0$, i.e., $B(x_i,\sigma_i)$, for which 
 solvent molecules and  ions cannot enter. 
On the other hand,  the solvent and ion concentrations are the same as their bulk concentrations in regions sufficiently far away from the solute.
As a consequence, such regions cannot be contained in the solute phase in any microstate. 
Based on these observations, we may assume that there are two open  subsets, $\Oi$ and $\Oe$, of $\Omega$ with $ \bigcup_{i=1}^{N_m} \overline{B}(x_i,\sigma_i) \subset  \Oi \Subset   \Omega\setminus \overline{\Omega}_{\rm e}$   and $\partial\Omega \subset \partial\Oe$ such that
\begin{equation}
\label{Constrain 0}
\Oi  \subset D_k \subset \Omega\setminus \overline{\Omega}_{\rm e}   \quad \text{for all }k\in \cK.
\end{equation}
For instance, one can take   $\Oi$ to be the region enclosed by a smoothed solvent excluded surface and $\Oe$ to be region outside a smoothed solvent accessible surface.
Let  $\Ot:=\Omega\setminus \left(\overline{\Omega}_{\rm i}\cup \overline{\Omega}_{\rm e} \right)$ be the transition region.
By Assumption~\eqref{Constrain 0}, in all microstates, 
\begin{itemize}
\item ions are located outside $\Oi$, and
\item the solute-solvent interfaces are located inside $\overline{\Omega}_{\rm t}$.
\end{itemize}
In addition, we define $\Sigma_1=\partial\Oi$ and $\Sigma_0=\partial\Oe\setminus \partial\Omega$.
We assume that $\Sigma_0, \Sigma_1$ are $C^2$ and $\Ot$ is connected so that $\Sigma_0\cap \Sigma_1 = \emptyset$. 
See Figure~\ref{fig:sharp}(B) for an illustration.

The hydrophobicity of the amino acids in the biomolecule varies from position to position.
To account for this phenomenon, we introduce a positive variable surface tension function $\theta \in C^1(\overline{\Omega}_{\rm t})$.
Without loss of generality, we may extend $\theta$ to be a   function in $C^1(\overline{\Omega})$, still denoted by $\theta$, such that
\begin{align}
\label{bound of theta}
 \theta_0 \leq \theta(x) \leq \theta_1 ,\quad x\in \Omega 
\end{align}
for some constants $0<\theta_0<\theta_1$.
By \cite[Lemma~1]{MR2345913},
\begin{equation}
\label{Def: weighted total variation}
\int_\Omega \theta(x) d| D f(x)|=\sup \left\{ \int_\Omega f(x) \nabla \cdot (\theta(x) \phi(x))\, dx : \,  \phi\in C^1_c(\Omega;\R^3), \, \|\phi\|_\infty \leq 1  \right\}, \quad f\in BV(\Omega).
\end{equation}
See  \cite{MR2719171} for the definition of $BV$-functions.
It is clear that, for any $f\in BV(\Omega)$ with $f\equiv 0$ in $\Oe$ and $f\equiv 1$ in $\Oi$, the value of
$ 
\int_\Omega \theta(x) d |Df(x)|   
$ 
is independent of how $\theta$ is extended outside $\overline{\Omega}_{\rm t}$.

As proposed in \cite{PhysRevLett.96.087802, doi:10.1063/1.2171192},  
the solvation free energy in microstate $\cM_k$ predicted by a sharp-interface VISM  is   the value of the following energy functional   evaluated at $\chi_{D_k}$:
\begin{gather}
\cE(\chi_{D_k}) =\Inp(\chi_{D_k})  + \Ip(\chi_{D_k},\psi) ,
\label{Sharp interface energy functional}
\end{gather}
where  $\psi:\Omega\to \R$ is the electrostatic potential.
The two components $\Inp$ and $\Ip$ are termed the nonpolar and polar portion of the solvation energy, respectively. 
For every fixed $D_k$, $\psi=\psi_{\chi_{D_k}}$ is chosen to maximize $\Ip(\chi_{D_k},\psi)$ among all $\psi$ taking a predetermined Dirichlet boundary value $\psi_D$ on $\partial\Omega$.

The nonpolar solvation energy consists of three parts:
\begin{equation}
\label{Def: nonpolar energy}
\Inp(\chi_{D_k}) =  \int_\Omega   \theta d|D\chi_{D_k}|   + P_h {\rm Vol}(D_k)   +\int_{U_k}   \rho_s U^{\mathrm{vdW}}(x)    \, dx .
\end{equation} 
When $\theta\equiv \gamma$ for some constant $\gamma>0$, the first term reduces to 
$ 
\gamma {\rm Per}(D_k;\Omega)
$ 
with ${\rm Per}(D_k;\Omega)$ being the perimeter (in $\Omega$) of the biomolecule region $D_k$. This term  measures the disruption of intermolecular and/or intramolecular bonds that
occurs when a surface is created.   
In addition, ${\rm Vol}(D_k)$ represents the volume of   $D_k$; 
$P_h$ is the (constant) hydrodynamic pressure.
Therefore, $   P_h {\rm Vol}(D_k)  $ is the mechanical work of creating the biomolecular size vacuum
in the solvent.
In the last integral,
$\rho_s$ is the (constant) solvent bulk density; and $U^{\mathrm{vdW}} $ represents the   Lennard-Jones potential \cite{Wagoner:2006}; as such $U^{\mathrm{vdW}}\in C^\infty(\overline{\Omega}\setminus \{x_1,\cdots,x_{N_m}\})$.

Currently, one of the most widely-used   polar solvation models   is based on the  Poisson-Boltzmann (PB) theory \cite{Grochowski:2007,Lamm:2003,Fogolari:2002,FOGOLARI1997135,Baker:2005}.
In the framework of classical PB theory, the polar energy is expressed as 
\begin{align}
\label{Def: polar energy}
\begin{split}
\hspace*{-2em}
\Ip(\chi_{D_k},\psi)
=     \int_{D_k}   \left[    \rho (x)  \psi(x)     - \frac{\epsilon_p}{8\pi} |\nabla \psi (x) |^2 \right]\, d x   
    - \int_{U_k} \left[ \frac{\epsilon_s}{8\pi} |\nabla \psi (x) |^2  +   \beta^{-1} \sum\limits_{j=1}^{N_c} c_j^\infty  (e^{- \beta q_j \psi(x)    }-1)         \right]\, d x,
\end{split}
\end{align} 
where $\rho$ is an $L^\infty$-approximation of the solute partial charges    supported  in $\bigcup_{i=1}^{N_m} B(x_i,\sigma_i)$.
$\epsilon_p$ and $\epsilon_s$ are the dielectric constants of the  solute  and the solvent, respectively. 
Usually, $\epsilon_p\approx 1$ for the protein and $\epsilon_s\approx 80$ for the water.
$q_j$ is the
charge of ion species $j$, $j=1,2,\cdots,N_c$, and $\beta=1/k_B T$, where $k_B$ is the Boltzmann constant, $T$ is the (constant) absolute  temperature.
$c_j^\infty $ is the (constant) bulk concentration of the $j-$th ionic species. 
For notational brevity, we define
\begin{equation}\label{Def fn B}
B(s)= \beta^{-1} \left[ \sum\limits_{j=1}^{N_c} c_j^\infty  \left( e^{- \beta s q_j } -1 \right) \right] ,\quad s\in \R.
\end{equation}
In addition, we assume   the charge neutrality condition  
\begin{equation}\label{neutral}
\sum\limits_{j=1}^{N_c} c_j^\infty q_j=0.
\end{equation}
It is important to observe that $B(0)=0$  and, by \eqref{neutral}, $B^\prime(0)=0$ and $B^\prime(\pm \infty)=\pm \infty$. Further,  $B^{\prime\prime}(s)>0$. We thus conclude that $B(0)=\underset{{s\in \R}}{\min} B(s)$ and $B$ is strictly convex.

To sum up, in microstate $\cM_k$, the solvation energy is given by 
\begin{equation}
\begin{split}\label{sharp-interface energy functional}
\cE(\chi_{D_K} )= & \int_\Omega   \theta d|D\chi_{D_k}|   + P_h {\rm Vol}(D_k)     +\int_{U_k}   \rho_s U^{\mathrm{vdW}}(x)    \, dx  + \int_{D_k}   \left[    \rho (x)  \psi(x)     - \frac{\epsilon_p}{8 \pi} |\nabla \psi (x) |^2 \right]\, d x   \\
&    - \int_{U_k} \left[ \frac{\epsilon_s}{8\pi} |\nabla \psi (x) |^2  +   B(\psi(x))       \right]\, d x,
\end{split}
\end{equation}
where $\psi=\psi_{\chi_{D_k}}$  solves   the PB equation:
\begin{equation}
\label{PBE}
\nabla \cdot ( (\epsilon_p \chi_{D_k} + \epsilon_s \chi_{U_k} ) \nabla \psi)  -4\pi \chi_{U_k} B'(\psi)+ 4\pi \rho=0   \quad \text{in }\Omega
\end{equation}
in  the admissible set
\begin{equation}
\label{Def cA}
\cA =\{\psi\in H^1(\Omega): \, \psi|_{\partial \Omega}=\psi_D\} \quad \text{for some } \psi_D\in W^{1,\infty}(\Omega).
\end{equation}

\begin{remark}\label{Rmk: PBE}
Now, we will present a concise discussion regarding the expression for polar energy expression \eqref{Def: polar energy}.
This discussion  will   serve as a crucial foundation for the derivation of EASE.
To begin, let us review the definitions of the random variables $X$ and $\cC_j$ defined in Section~\ref{Section: background}.

\noindent
A pivotal element of the Poisson-Boltzmann theory is an electrostatic potential  $\psi$, which is identical across all states within a fixed (statistical) ensemble under consideration. 
Consider a fixed microstate $\cM_k$ as a (statistical) ensemble. In $\cM_k$, the electrostatic potential  $\psi$ satisfies the fundamental equation of electrostatics, the Poisson  equation:
\begin{equation}
\label{Poisson eq}
-\nabla \cdot [ (\epsilon \chi_{D_k} + \epsilon_s \chi_{U_k}) \nabla \psi ]  = 4\pi \left( \rho + \sum\limits_{j=1}^{N_c} q_j c_j^k  \right) ,
\end{equation}
where  $c_j^k$ is the  (number) concentration  of the $j$-th ion speices  in  $\cM_k$, i.e.
\begin{equation}
\label{ion concentration in k}
c_j^k = \langle \cC_j \rangle_k =\bE(\cC_j |X=k);
\end{equation}
and thus $c_j^k\equiv 0$  in $D_k$. 
However, since   the probability distribution of $\cC_j $ is unknown, it is impossible to directly compute $c_j^k$ by means of \eqref{ion concentration in k}. 
Indeed,   the expression of   $c_j^k$ should be derived from the Helmholtz free energy, cf. \cite{FOGOLARI1997135, doi:10.1137/080712350, MR4218921}.

To find the Helmholtz free energy $H_k$ in  $\cM_k$, we consider the random variable $E_{\psi}:\cS\to \R$,  where
\begin{align}\label{internal energy rv}
 E_{\psi}(\cS_\alpha) 
=   \int_{D_k}    \left(    \rho \psi   - \frac{\epsilon_p}{8\pi} |\nabla \psi|^2  \right)  dx   
  +  \int_{U_k}  \left[ \sum\limits_{j=1}^{N_c} \left(   q_j \cC_{\alpha,j} \psi  -  \cC_{\alpha,j} \mu_j^\infty \right) - \frac{\epsilon_s}{8 \pi} |\nabla \psi|^2  \right] dx 
\end{align}
is the internal energy in   state $\cS_\alpha$ with $\mu_j^\infty$ being the chemical potential of the $j$-th ion species, which is uniform everywhere in a grand canonical ensemble. 
Then the ensemble average  of $E_\psi$ in $\cM_k$ is given by
\begin{align}\label{Def: internal energy}
 \langle E_\psi \rangle_k =\bE( E_\psi |X=k )= \int_{D_k}   \left(     \rho \psi    - \frac{\epsilon_p}{8\pi} |\nabla \psi|^2 \right)  dx   
  +  \int_{U_k}  \left[ \sum\limits_{j=1}^{N_c} \left(   q_j c_j^k \psi  -  c_j^k \mu_j^\infty \right) - \frac{\epsilon_s}{8 \pi} |\nabla \psi|^2  \right] dx .
\end{align}
Meanwhile, within the framework of classic PB theory, the entropy $S_k$ in $\cM_k$ is given by
\begin{align}\label{Def: entropy}
  -T S_k  
=  \beta^{-1}\int_{U_k}   \sum\limits_{j=1}^{N_c}c_j^k \left[  \ln(  \sv c_j^k ) -1  \right]\, dx,
\end{align}
where $1/\sv$ is a reference concentration. 
It is crucial to highlight that, in statistic mechanics,   entropy is not a random variable of $(\cS,\cF,\cP)$ and  should not be obtained by ensemble averaging its counterparts in all states.

\noindent
The Helmholtz free energy $H_k$  is given by
$H_k = \langle E_\psi \rangle_k  - T S_k$.
Equating the first variation of $H_k$ with respect to $c_j^k$ to zero gives  
$$
\mu_j^\infty=  \beta^{-1} \ln(\sv c_j^k (x) ) + q_j \psi(x)     \quad  \text{in } U_k .
$$
This leads to the relation $c_j^k(x)= \chi_{U_k}(x) \sv^{-1} e^{\beta\mu_j^\infty} e^{-\beta q_j \psi(x)} $. 
When $x$ is sufficient far away from the solute, we have $\psi(x)=0$ and $c_j^k(x)=c_j^\infty$.
This implies that $c_j^\infty= \sv^{-1} e^{\beta\mu_j^\infty}$.
Hence we obtain   the relation
 \begin{equation}
\label{ion concentration and Boltzmann distribution}
c_j^k(x)= \chi_{U_k}(x) c_j^\infty e^{-\beta q_j \psi(x)} .
\end{equation}
This is exactly the Boltzmann distribution.
It is important to emphasize that the term $\chi_{U_k}$ in Equation \eqref{ion concentration and Boltzmann distribution} arises from the fact that, in $H_k$, the domain of integration for any integral involving $c_j^k$ is restricted to $U_k$. This restriction is necessary because $c_j^k\equiv 0$ within the domain $D_k$. By using the characteristic function $\chi_{U_k}$, we ensure that the integration is performed only over the relevant region $U_k$ where $c_j^k$ is nonzero.
Plugging \eqref{ion concentration and Boltzmann distribution} back into the expression of $H_k$ yields 
\begin{align}
\label{Helmholtz energy}
\begin{split}
H_k
=     \int_{D_k}   \left(     \rho (x)  \psi(x)     - \frac{\epsilon_p}{8\pi} |\nabla \psi (x) |^2 \right)\, d x   
    - \int_{U_k} \left( \frac{\epsilon_s}{8\pi} |\nabla \psi (x) |^2  +   \beta^{-1} \sum\limits_{j=1}^{N_c} c_j^\infty   e^{- \beta q_j \psi(x)    }          \right)\, d x.
\end{split}
\end{align}
Since the polar energy equals zero when $\psi$ vanishes everywhere, 
following \cite{Sharp1990}, a constant term
\begin{equation}
\label{reference state}
\beta^{-1} \int_{U_k}     \sum\limits_{j=1}^{N_c}   c_j^\infty    dx
\end{equation}
should be added to \eqref{Helmholtz energy}, which yields the polar energy expression~\eqref{Def: polar energy} in   $\cM_k$.
Note that  $\sum\limits_{j=1}^{N_c} q_j c_j^k = -\chi_{U_k} B'(\psi) $, see \eqref{Def fn B}. 
Plugging this expression into \eqref{Poisson eq} shows that $\psi$  solves \eqref{PBE}.
By Lemma~\ref{Lem: GPBE estimate},  $\psi$ maximizes $\Ip(\chi_{D_k},\cdot)$ in $\cA$.

To summarize, within the framework of the classic Poisson-Boltzmann theory, the following points are crucial in the derivation of EASE in a fixed ensemble:

\begin{itemize}
\item 
Universal Electrostatic Potential: It is necessary to use a universal electrostatic potential, denoted as $\psi$, which applies across all microstates, when deriving the Boltzmann distribution. This potential is independent of specific microstates and is used consistently in the calculations.
\item
Non-Random Nature of Entropy: In the context of EASE, it is important to note that one component of the solvation energy, namely entropy, is not considered a random variable of the system $(\cS,\cF,\cP)$. Therefore, when calculating the entropy within EASE, it should not be obtained by averaging its counterpart in the individual microstates. 
\end{itemize}
\noindent
By considering these factors, EASE can be derived within a fixed ensemble, incorporating an appropriate electrostatic potential and accounting for the non-random nature of entropy.
\end{remark}

\section{Modeling of Ensemble Average  Solvation Energy}\label{Section: ensemble average solvation functional}


If one  tries to directly use  \eqref{def ensemble average}  and compute EASE,  an obvious barrier is how to calculate the probability $p_k$.
Although, according to statistical mechanics,  $p_k$ can be obtained by means of the Boltzmann weight,  an accurate approximation of the Boltzmann weight    requires a sufficiently large sampling of   microstates, which creates a significant computational burden. 
Indeed, in the routine calculation of the  EASE, one needs to carry out certain explicit solvent simulations, e.g. molecular dynamics (MD), to obtain thousands of solute-solvent states and perform energy calculations for each state.

The core of our modeling paradigm is, instead of ensemble averaging the outputs of the solvation energy functional~\eqref{sharp-interface energy functional},  one should ensemble average the inputs, i.e. $\chi_{D_k}$, to obtain a diffuse-interface profile.
Then the EASE can be computed by using this one profile instead of thousand of solute-solvent states.
This leads to the following definition: 
\begin{equation}
\label{def of u}
u=\sum_{k\in\cK}p_k \chi_{D_k} .
\end{equation}
As such, $u(x)$ represents the probability of   $x\in \Omega$  found in the solute phase among all microstates.
Figure~\ref{fig:sharp}(B) shows the range of $u$ in different regions.
The definition of $u$ enforces  the following physical constraints
\begin{equation}
\label{constrain 1}
u(x)\in [0,1] \quad \text{for a.a. } x\in \Omega
\end{equation}
and
\begin{equation}
\label{constrain 2}
u=1 \quad \text{a.e. in }\Oi  \quad \text{and} \quad  u=0 \quad \text{a.e. in }\Oe .
\end{equation} 
Then the admissible set for $u$ is  defined as
\begin{equation}
\cX=\{   u\in BV(\Omega): \, u \, \text{ satisfies Constraints~\eqref{constrain 1}~and~\eqref{constrain 2}} \}  .
\end{equation}
To construct the EASE functional, we introduce the following  functions defined on $\cK$: 
\begin{equation}
\begin{split}
f_D: \cK \to BV(\Omega), \quad  & f_D(k)=\chi_{D_k} \\
f_U: \cK \to BV(\Omega), \quad  & f_U(k)=\chi_{U_k} \\
f_s: \cK \to \R  ,   \quad &   f_s(k)=  \int_\Omega \theta d |D \chi_{D_k}| \\
f_{\rm i} : \cK \to \cL(L^1(\Omega),\R) = L^\infty(\Omega) , \quad &   f_{\rm i}(k)v =\int_\Omega \chi_{D_k} v\, dx = \int_{D_k} v \, dx  , \quad v\in L^1(\Omega)\\
f_{\rm e} : \cK \to \cL(L^1(\Omega),\R) = L^\infty(\Omega) , \quad &   f_{\rm e}(k)v =\int_\Omega  \chi_{U_k}  v\, dx= \int_{U_k} v \, dx , \quad v\in L^1(\Omega).
\end{split}
\end{equation}
The ensemble averages of all bulk and interfacial energy components can be derived from the above functions. For example, 
\begin{itemize}
\item  the ensemble average of  interfacial energy, i.e. $\int_\Omega \theta d|D \chi_{D_k}|$, is given by $\langle f_s(X) \rangle$;
\item the ensemble average of the term $P_h {\rm Vol}(D_k)$ is given by $\langle f_{\rm i}(X) P_h \rangle$;
\item the ensemble average of the term $\int_{U_k} \rho_s U^{\rm vdW}\, dx$, is given by $\langle f_{\rm e}(X) (\rho_s U^{\rm vdW} ) \rangle$.
\end{itemize}


\begin{prop}
\label{Prop: bulk enegy}
Assume that $v\in L^1(\Omega)$   is an   energy density function. Then the ensemble average of the energy stored in the solute and solvent phases can be computed as follows:
$$
\langle f_{\rm i}(X) v \rangle  =    \int_\Omega u v  \, dx
\qquad 
\text{and}
\qquad
\langle f_{\rm e}(X) v \rangle  =   \int_\Omega (1- u) v \, dx.
$$
\end{prop}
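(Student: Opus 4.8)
The plan is to unwind the definitions of the ensemble average and of the random operator $f_{\rm i}(X)$, and then recognize the resulting weighted sum as exactly the function $u$ from \eqref{def of u}. Concretely, fix the energy density $v\in L^1(\Omega)$. By the definition \eqref{def ensemble average} of the ensemble average applied to the $\R$-valued random variable $\cS_\alpha\mapsto f_{\rm i}(X(\cS_\alpha))v$, together with the fact that $X$ is constant (equal to $k$) on each microstate $\cM_k$ and $\bE(\,\cdot\mid X=k)$ of a constant is that constant, we get
\begin{equation*}
\langle f_{\rm i}(X)v\rangle = \sum_{k\in\cK} p_k\, \bE\bigl(f_{\rm i}(X)v \,\big|\, X=k\bigr) = \sum_{k\in\cK} p_k\, f_{\rm i}(k)v = \sum_{k\in\cK} p_k \int_{D_k} v\,dx = \sum_{k\in\cK} p_k \int_\Omega \chi_{D_k} v\,dx.
\end{equation*}

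The second step is to interchange the finite sum over $k\in\cK$ with the integral over $\Omega$ — legitimate because $\cK$ is a finite index set and each $\chi_{D_k}v\in L^1(\Omega)$ — to obtain $\int_\Omega\bigl(\sum_{k\in\cK} p_k\chi_{D_k}\bigr) v\,dx$, and then to substitute the definition $u=\sum_{k\in\cK}p_k\chi_{D_k}$ from \eqref{def of u}. This gives $\langle f_{\rm i}(X)v\rangle=\int_\Omega u v\,dx$, which is the first claimed identity. For the second identity, I would run the identical argument with $f_{\rm e}$ in place of $f_{\rm i}$, using $f_{\rm e}(k)v=\int_{U_k} v\,dx=\int_\Omega\chi_{U_k}v\,dx$ and the pointwise relation $\chi_{U_k}=1-\chi_{D_k}$ (valid since $U_k=\Omega\setminus\overline{D}_k$ and $\partial D_k$ is $\cL^3$-null for a Caccioppoli set), so that $\sum_{k\in\cK}p_k\chi_{U_k}=\sum_{k\in\cK}p_k(1-\chi_{D_k})=1-u$, using $\sum_{k\in\cK}p_k=1$. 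Hence $\langle f_{\rm e}(X)v\rangle=\int_\Omega(1-u)v\,dx$.

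There is essentially no hard analytic obstacle here: the statement is a bookkeeping consequence of the definitions, the finiteness of $\cK$, and the fact that each $\chi_{D_k}v$ is integrable (so Fubini/linearity for finite sums applies trivially). The only point that deserves a word of care is confirming that $\cS_\alpha\mapsto f_{\rm i}(X(\cS_\alpha))v$ is genuinely the pairing of the $L^\infty$-valued random variable $f_{\rm i}(X)$ against the fixed $L^1$ density $v$, i.e. that the ``ensemble average of an operator-valued quantity'' in \eqref{def ensemble average} is interpreted as the ensemble average of its action on $v$; this is immediate from the identification $\cL(L^1(\Omega),\R)=L^\infty(\Omega)$ already recorded in the definitions of $f_{\rm i},f_{\rm e}$, and from the conditional expectation being trivial on each $\cM_k$ where $X\equiv k$. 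So the proof is short: apply \eqref{def ensemble average}, collapse $\bE(\cdot\mid X=k)$ on the constant value $f_{\rm i}(k)v$, pull the finite sum inside the integral, and invoke \eqref{def of u}; then repeat verbatim for $f_{\rm e}$ using $\chi_{U_k}=1-\chi_{D_k}$ and $\sum_k p_k=1$.
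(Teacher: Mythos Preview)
Your proposal is correct and follows essentially the same route as the paper: apply the definition of the ensemble average, collapse the conditional expectation on each $\cM_k$ to the deterministic value $f_{\rm i}(k)v=\int_{D_k}v\,dx$, interchange the finite sum with the integral, and identify $\sum_k p_k\chi_{D_k}=u$. The paper only writes out the $f_{\rm i}$ case and leaves $f_{\rm e}$ implicit, whereas you also record the $\chi_{U_k}=1-\chi_{D_k}$ step, but the argument is the same.
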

\begin{proof}
The proof is straightforward. Indeed, we will only check the equality for $\langle f_{\rm i}(X) v \rangle$. Consider the random variable $f_{\rm i}(X) v : \cS \to \R$. Then
\begin{align*}
\langle f_{\rm i}(X) v \rangle= \bE[\bE(f_{\rm i}(X) v|X=k)]   =\sum\limits_{k\in \cK} p_k \int_{D_k}  v\, dx =  \int_\Omega \sum\limits_{k\in \cK} p_k \chi_{D_k} v\, dx  
=  \int_\Omega u v  \, dx.
\end{align*}
\end{proof}

Following the discussion in Remark~\ref{Rmk: PBE}, let $\psi:\Omega\to \R$ be
the universal electrostatic potential $\psi$, which is identical among all $\cS_\alpha$. To formulate the Poisson equation, we notice that the dielectric function    $ \epsilon_p f_D(k) + \epsilon_s f_U(k)=\epsilon_p \chi_{D_k} + \epsilon_s \chi_{U_k}$ in \eqref{Poisson eq} should be replaced by  the ensemble averaged one:
$$
\epsilon(u):=\bE(\epsilon_p f_D(X) + \epsilon_s f_U(X))=u  \epsilon_p +(1-u )\epsilon_s .
$$
Hence  $\psi \in \cA$  solves  
\begin{equation}
\label{Poisson eq grand}
-\nabla \cdot [  \epsilon(u) \nabla \psi ] = 4\pi\left(  \rho + \sum\limits_{j=1}^{N_c} q_j c_j    \right) \quad \text{in }\Omega  ,
\end{equation}
where  $c_j$ is the   ion concentration of the $j$-th ion species in the ensemble under consideration, i.e.
\begin{equation}
\label{mean ion con}
c_j =\bE(\cC_j)= \bE[\bE(\cC_j|X=k)]=\sum\limits_{k\in \cK} p_k c_j^k = \sum\limits_{k\in \cK} p_k c_j^k \chi_{U_k}. 
\end{equation}
Note that the function $c_j\equiv 0$ in the set $ \{u=1\}: = \{x\in \Omega: u(x)=1\} $.
By a similar argument to Remark~\ref{Rmk: PBE} and Proposition~\ref{Prop: bulk enegy},   the ensemble average of $E_\psi$, cf. \eqref{internal energy rv} and \eqref{Def: internal energy},  equals 
\begin{align}\label{Def: internal energy ensemble}
\notag \langle E_\psi \rangle &=  \sum_{k\in\cK} p_k \left[ \int_{D_k}   \left(     \rho \psi    - \frac{\epsilon_p}{8\pi} |\nabla \psi|^2 \right)  dx   
  + \int_{U_k} \left(   \sum\limits_{j=1}^{N_c} \left(q_j c_j^k \psi -c_j^k \mu_j^\infty \right)    -\frac{\epsilon_s }{8\pi}   |\nabla \psi|^2     \right)  dx  \right]\\
&=\int_\Omega \left[ \rho \psi + \sum\limits_{j=1}^{N_c} \left(q_j c_j \psi -c_j  \mu_j^\infty \right)    - \frac{\epsilon(u)}{8\pi}|\nabla \psi|^2    \right] \, dx.
\end{align}
In the following two subsections, we will derive two functionals for the EASE within the frameworks of classic PB theory and size-modified PB theory.

\subsection{Classic Poisson-Boltzmann Theory}\label{Section: classic PB}

In this subsection, we will first consider the classic PB theory, in which ions and solvent molecules are assumed to be point-like and have no correlation between their concentrations. 

\subsubsection{Ensemble Average Polar Energy}\label{Secion: ensemble average polar energy}

In the   classic PB theory framework, the entropy can be formulated as
\begin{align}\label{Def: entropy ensemble}
  -T S  
=  \beta^{-1}\int_{\{u< 1\}}   \sum\limits_{j=1}^{N_c}c_j  \left[  \ln(\sv c_j  ) -1  \right]\, dx.
\end{align}
The domain of integration $\{u<1\}=\{x\in \Omega: u(x) <1 \}$  is due to the fact that $c_j$ vanishes identically in $\{u=1\}$.
Applying a similar argument leading  to \eqref{ion concentration and Boltzmann distribution} to the Helmholtz energy $H=\langle E_\psi \rangle -ST$, one can obtain the Boltzmann distribution
\begin{equation}\label{express of cj}
c_j= \chi_{\{u<1\}}  c_j^\infty e^{-\beta q_j \psi }.
\end{equation}
After plugging \eqref{express of cj} into   $H$, as in \eqref{reference state}, the constant term
\begin{equation}
\label{reference state ensemble}
\beta^{-1} \int_\Omega \chi_{\{u<1\}}     \sum\limits_{j=1}^{N_c}   c_j^\infty    dx
\end{equation}
needs to be added to $H$ to adjust the reference state of the zero energy in the ensemble under consideration.
Finally, we arrive at  the   ensemble average polar energy
\begin{align}\label{ensemble average polar formulation 2}
\begin{split}
\Ip (u,\psi) = H+ \beta^{-1} \int_\Omega \chi_{\{u<1\}}     \sum\limits_{j=1}^{N_c}   c_j^\infty    dx= \int_\Omega   \left[    \rho \psi    - \frac{\epsilon(u) }{8\pi}|\nabla \psi|^2 - \chi_{\{u<1\}}  B(\psi)       \right]\, dx. 
\end{split}
\end{align}
On the other hand, replacing $c_j$ in \eqref{Poisson eq grand} by \eqref{express of cj} shows that $\psi\in \cA$ solves the generalized PB equation:
\begin{equation}
\label{GPBE2}
 \nabla \cdot [  \epsilon(u) \nabla \psi ] -4\pi \chi_{\{u<1\}}B'(\psi) + 4\pi \rho=0 \quad \text{in }\Omega.	
\end{equation}
Lemma~\ref{Lem: GPBE estimate} shows that $\psi$  maximizes  $\Ip (u,\cdot)$ in $\cA$ for any fixed $u\in \cX$.



\subsubsection{Ensemble Average  Interfacial Energies}\label{Section: ensemble average interfacial energy}
Following  the discussions in Section \ref{Secion: ensemble average polar energy}, the EASE in the framework of classic PB theory can be represented by 
\begin{align}\label{Def:ensemble average L-initial}
\langle  f_s(X)  \rangle  + \sL(u ) , \quad \text{where } \sL(u )=  \int_\Omega \left[ P_h u + \rho_s (1-u)  U^{\mathrm{vdW}}  \right]\, dx  + \Ip(u,\psi_u)  
\end{align}
with $\psi_u$   maximizing $\Ip(u,\cdot)$ in $\cA$.
It only remains to compute the ensemble averaged   interfacial energy 
$$
\langle  f_s(X)  \rangle = \bE[\bE(f_s(X)|X=k)]=  \sum_{k\in \cK} p_k \int_\Omega \theta d|D \chi_{D_k}|.
$$
Proposition~\ref{Prop: surface area} below shows that the integral $\int_\Omega \theta d|Du|$ can be used to approximate the ensemble averaged   interfacial energy.
Before proving Proposition~\ref{Prop: surface area}, we will need the following lemma.

\begin{lem}\label{Lem: modify sets}
Let $\{D_k\}_{k\in \cK}$
be a family of    Caccioppoli sets satisfying $\Oi  \subset D_k \subset \Omega\setminus \overline{\Omega}_{\rm e}  $  and $p_k\in [0,1]$ with $\sum_{k\in \cK} p_k=1$. 
Then for each $\varepsilon>0$,
there exists another  family $\{\widetilde{D}_k \}_{k\in \cK}$ of Caccioppoli sets  satisfying $\Oi  \subset \widetilde{D}_k \subset \Omega\setminus \overline{\Omega}_{\rm e}    $ and
\begin{align}
\label{EASP1}
\cH^2(\partial^* \widetilde{D}_k \cap \partial^* \widetilde{D}_j )=0 ,\quad \forall k,j\in \cK,\, \, k\neq j,
\end{align}
with $\partial^*D$ being the reduced boundary of a Caccioppoli set $D$.
Moreover,        
\begin{equation}\label{eq: modify energy}
\left| \sum\limits_{k\in \cK}p_k \int_\Omega \theta d |D \chi_{D_k}|  +   \sL (u  )  - \sum\limits_{k\in \cK}p_k \int_\Omega \theta d |D \chi_{\widetilde{D}_k}|   - \sL (\widetilde{u}  )     \right|<\varepsilon , 
\end{equation}
where 
$u=\sum\limits_{k\in \cK}p_k \chi_{D_k}$ and $\widetilde{u}=\sum\limits_{k\in \cK}p_k \chi_{\widetilde{D}_k}.$
\end{lem}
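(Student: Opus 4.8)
The plan is to fix $\varepsilon > 0$ and perturb each $D_k$ slightly so that the reduced boundaries become pairwise $\cH^2$-disjoint while changing every energy term by an arbitrarily small amount. The natural device is to rescale/translate, or more robustly \emph{dilate each $D_k$ by a tiny different amount} so that the boundaries $\partial^* \widetilde D_k$ are pushed onto distinct "level sets". Concretely, I would pick a smooth function $\Phi$ (for instance the signed-distance-like flow, or a fixed vector field $V\in C^\infty_c(\Ot;\R^3)$ that points outward near $\Sigma_1$ and inward near $\Sigma_0$) and set $\widetilde D_k = \phi_{t_k}(D_k)$ where $\phi_t$ is the flow of $V$ and $t_1 < t_2 < \cdots < t_K$ are distinct parameters in a small interval $(0,\delta)$. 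Since $V$ is supported in $\Ot$ and tangential behaviour is arranged so the images stay between $\Oi$ and $\Omega\setminus\overline\Oe$, the constraint $\Oi \subset \widetilde D_k \subset \Omega\setminus\overline\Oe$ is preserved. For distinct $t_k\ne t_j$ the hypersurfaces $\phi_{t_k}(\partial^* D_k)$ and $\phi_{t_j}(\partial^* D_j)$ generically meet in an $\cH^2$-null set; to make this rigorous one uses a coarea/Sard-type argument or a direct transversality computation, choosing the $t_k$ outside a Lebesgue-null "bad set".

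The second half is the continuity estimate \eqref{eq: modify energy}. Each of the five energy pieces is continuous under this perturbation: the weighted total variation $\int_\Omega\theta\,d|D\chi_{D}|$ changes continuously in $t$ because pushing a Caccioppoli set by a smooth diffeomorphism changes its (weighted) perimeter continuously — one can estimate $\big|\int\theta\,d|D\chi_{\phi_t(D)}| - \int\theta\,d|D\chi_D|\big|$ by the change-of-variables formula, using $\|D\phi_t - \mathrm{Id}\|_\infty = O(t)$ and $\theta\in C^1$; the volume term $P_h\int u$, the van der Waals term $\int(1-u)\rho_s U^{\mathrm{vdW}}$ (here $U^{\mathrm{vdW}}$ is smooth away from the atom centres, which lie in $\Oi$ where nothing moves), and the averaged dielectric $\epsilon(u)$ all depend continuously on $\chi_{D_k}$ in $L^1(\Omega)$; and $\|u - \widetilde u\|_{L^1} \le \sum_k p_k \|\chi_{D_k}-\chi_{\widetilde D_k}\|_{L^1} = \sum_k p_k \cL^3(D_k \triangle \widetilde D_k) = O(\delta)$. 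Finally $\Ip(u,\psi_u)$ depends continuously on $u$: by Lemma~\ref{Lem: GPBE estimate} $\psi_u$ is the maximizer of $\Ip(u,\cdot)$, and standard stability of the generalized PB equation \eqref{GPBE2} under $L^1$-perturbation (equivalently $L^\infty$-weak$^*$ perturbation) of the coefficient $\epsilon(u)$ and of $\chi_{\{u<1\}}$ gives $\Ip(\widetilde u,\psi_{\widetilde u}) \to \Ip(u,\psi_u)$; hence $\sL(\widetilde u)\to \sL(u)$. Choosing $\delta$ small enough makes the left side of \eqref{eq: modify energy} less than $\varepsilon$.

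The main obstacle I anticipate is the disjointness condition \eqref{EASP1}: it is easy to separate two boundaries, but one must separate all $\binom{K}{2}$ pairs simultaneously while \emph{also} controlling the energy, and one must be careful that the $D_k$ are merely Caccioppoli sets (so $\partial^* D_k$ need not be smooth) — this is why I would flow \emph{all} of them by the same field but with \emph{different} times, so that $\phi_{t_k}(\partial^* D_k) \cap \phi_{t_j}(\partial^* D_j) = \phi_{t_k}\big(\partial^* D_k \cap \phi_{t_k^{-1}\circ\phi_{t_j}}(\partial^* D_j)\big)$ and reduce to showing that for a.e.\ pair of times the diffeomorphic image of one rectifiable set meets the other in measure zero, which follows from Fubini applied to the map $(x,s)\mapsto \phi_s(x)$ together with the fact that a Lipschitz image of an $\cH^2$-finite set has $\sigma$-finite $\cH^2$ measure. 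A second, milder subtlety is the continuity of $\Ip(\cdot,\psi_\cdot)$ across the perturbation, for which I would invoke the convex-analytic machinery of Appendix~\ref{Appendix A} (strict convexity of $B$, coercivity) to get uniform bounds on $\psi_u$ in $H^1(\Omega)$ and then pass to the limit in the weak formulation of \eqref{GPBE2}.
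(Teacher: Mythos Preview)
Your overall strategy---perturb each $D_k$ slightly and verify continuity of every energy term---is exactly the paper's, and your continuity discussion is essentially correct.  One point you glossed over is that the convergence $\chi_{\{\widetilde u<1\}}\to\chi_{\{u<1\}}$ in $L^1$ is \emph{not} a consequence of $\widetilde u\to u$ in $L^1$; it holds here only because $\{u=1\}=\bigcap_{k:\,p_k>0}D_k$ and finite products of characteristic functions are $L^1$-stable under $L^1$-perturbation of the factors.  The paper uses this explicitly, together with Lemma~\ref{Lem: GPBE converge}, for the polar part.

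The real gap is in your disjointness argument.  A one-parameter flow cannot separate arbitrary reduced boundaries: if $\partial^*D_k$ and $\partial^*D_j$ share a flat piece $Q$ and $V$ happens to be tangent to $Q$ there, then $\phi_{t_k}(Q)\cap\phi_{t_j}(Q)$ has positive $\cH^2$ measure for \emph{every} pair $t_k\neq t_j$, so no ``null bad set'' of times exists.  Since the $D_k$ are arbitrary Caccioppoli sets, no single smooth field $V$ can be chosen transverse $\cH^2$-a.e.\ to all the $\partial^*D_k$ simultaneously, and the Fubini computation you propose on $(x,s)\mapsto\phi_s(x)$ breaks down precisely because this map need not have rank $3$.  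The paper handles this by first regularising: mollify each $\chi_{D_k}$, use Sard's theorem and the coarea formula to pass to a superlevel set $\{f_{k,j}>t\}$ with \emph{smooth} boundary that still approximates $D_k$ in $L^1$ and in weighted perimeter.  Once the boundary is $C^2$ it has a tubular neighbourhood, and one displaces it along its own unit normal---where transversality is automatic---so that for all but countably many displacement radii the moved surface meets each of the other (finitely many, $\cH^2$-finite) boundaries in an $\cH^2$-null set.  This smoothing-then-normal-displacement step is the missing ingredient in your plan.
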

\begin{proof}
In view of  Lemma~\ref{Lem: GPBE converge}, it suffices to construct  a family of Caccioppoli sets $\{D_{k,n}\}_{n=1}^\infty$ satisfying Assumption~\eqref{EASP1} and 
$\Oi  \subset D_{k,n} \subset \Omega\setminus \overline{\Omega}_{\rm e}    $
such that  \begin{equation}
\label{Dkj}
\lim\limits_{n\to \infty} \| \chi_{D_k} - \chi_{D_{k,n}} \|_{L^1}=0 ,\quad \lim\limits_{n\to \infty}    \int_\Omega \theta d|D\chi_{D_{k,n}}| =  \int_\Omega \theta d|D\chi_{D_k } | .
\end{equation}
Indeed, letting $u_n= \sum_{k\in \cK}p_k \chi_{D_{k,n}}$, \eqref{Dkj} implies that $u_n\to u$ and $\chi_{\{u_n<1\}} \to \chi_{\{u<1\}}$ in $L^1(\Omega)$. Then it follows from Lemma~\ref{Lem: GPBE converge} and the dominated convergence theorem that $ \sL (u_n,\psi_{u_n} )\to  \sL (u,\psi_u )$ as $n\to \infty$.

Because $\Sigma_i$, $i=0,1$, are $C^2$, 
there exists some $\a >0$ such that $\Sigma_i$   has a tubular neighborhood $B_{\a}(\Sigma_i)$  of width $\a >0$,   cf. \cite[Exercise 2.11]{MR737190} and \cite[Remark 3.1]{MR4160135}. 
For sufficiently small $r\in (0,\a)$, let 
$$
D_k^r= (D_k \cup B_r(\Sigma_1) ) \setminus B_r(\Sigma_0).
$$
In view of the relation ${\rm Per}(S\cup T; \Omega)+ {\rm Per}(S\cap T; \Omega) \leq  {\rm Per}(S ; \Omega)+ {\rm Per}(T ; \Omega) $ for any Caccioppoli sets $S,T\subset \Omega$, $D_k^r$ are again Caccioppoli sets. 
Recall that ${\rm Per}(S;\Omega)$ is the perimeter of $S$ in $\Omega$.	
According to Lemma~\ref{tech lem 2}, we have
$$
\lim_{r\to 0^+} \left\| \chi_{D_k} - \chi_{D_k^r}  \right\|_{L^1} =0,  \quad \lim\limits_{r\to 0^+}  \int_\Omega \theta d |\chi_{D_k^r}| = \int_\Omega \theta d |\chi_{D_k} |  .
$$
Note that the proof of Lemma~\ref{tech lem 2} is independent of  other results in this article.
Therefore, without loss of generality, we may assume that  for some sufficiently small $r>0$
$$
\Oi \cup B(\Sigma_1, r)  \subset D_k \subset \Omega\setminus \left(\overline{\Omega}_{\rm e} \cup B_r(\Sigma_0) \right) .
$$

 \smallskip

\noindent
{\bf Claim 1:} 
For every $k$,  there exists a sequence $\{f_{k,j}\}_{j=1}^\infty \subset C^\infty(\overline{\Omega})$ such that  $0\leq f_{k,j} \leq 1$ a.e. in $\Omega$ and 
$$
\lim\limits_{j\to \infty} \| \chi_{D_k} - f_{k,j} \|_{L^1}=0 ,\quad \lim\limits_{j\to \infty}    \int_\Omega \theta d|Df_{k,j}| =  \int_\Omega \theta d|D\chi_{D_k } | .
$$
\begin{subproof}[Proof of Claim 1] 
We consider $\chi_{D_k}$ as an element in $BV(\R^3)$. 
Choose a sequence $\varepsilon_j \to 0^+$ and define
$f_{k,j}:=\eta_{\varepsilon_j} \ast \chi_{D_k}$, where $\eta_{\varepsilon_j}$ is the standard Friedrichs mollifying kernel. 
Then $\lim\limits_{j\to \infty} \| \chi_{D_k} - f_{k,j} \|_{L^1}=0$.

\noindent \cite[Corollary~1]{MR2345913} implies that 
$ 
\int_\Omega \theta d|D \chi_{D_k} (x)| \leq  \liminf\limits_{j\to \infty} \int_\Omega \theta d|D f_{k,j} (x)| .
$ 
Note that $f_{k,j}\in \cX \cap C^\infty(\overline{\Omega})$ for sufficiently large $j$.
For any $\phi\in C^1_0(\Omega;\R^3)$ with $\|\phi\|_\infty \leq 1 $, 
it follows from \cite[Lemma~1]{MR2345913} that
\begin{align*}
\int_\Omega f_{k,j} \nabla \cdot (\theta \phi)\, dx &= \int_\Omega \left(\eta_{\varepsilon_j}\ast \chi_{D_k} \right)\nabla \cdot (\theta \phi)\, dx =\int_\Omega \chi_{D_k} \nabla \cdot \left[ \eta_{\varepsilon_j} \ast (\theta \phi ) \right]\, dx  \\
&=  \int_\Omega \chi_{D_k} \nabla \cdot  \left[ \theta \left( \frac{ \eta_{\varepsilon_j} \ast (\theta \phi ) }{\theta} \right) \right]\, dx \\
& \leq \left\| ( \eta_{\varepsilon_j} \ast \theta )/\theta  \right\|_{L^\infty }  \int_\Omega \theta d |D \chi_{D_k} (x)|.
\end{align*}
Here  
$ 
( \eta_{\varepsilon_j} \ast \theta )(x) =\int_\Omega \eta_{\varepsilon_j} (x-y) \theta (y)\, dy$ for $ x\in \Omega.
$ 
Taking supremum over all such $\phi$, we derive that
$$
 \int_\Omega \theta d |D f_{k,j}(x)| \leq  \left\| ( \eta_{\varepsilon_j} \ast \theta )/\theta  \right\|_{L^\infty}  \int_\Omega \theta d |D \chi_{D_k}(x)|.
$$
By the uniform continuity of $\theta$, it is not a hard task to verify that
$ 
\lim\limits_{j\to \infty}\left\| ( \eta_{\varepsilon_j} \ast \theta )/\theta  \right\|_{L^\infty(\Omega )}=1.
$ 
Therefore, the above inequality implies that
$
\limsup\limits_{j\to \infty} \int_\Omega \theta d |D f_{k,j}(x)| \leq \int_\Omega \theta d |D \chi_{D_k}(x)|.
$
\end{subproof}

By the Sard's Theorem, there exists some $S \subset (0,1)$ with $\cL^1((0,1)\setminus S)=0$ such that, for all $t\in S$, the super-level set $E^t_{k,j}=\{ f_{k,j}>t\}$ has a smooth boundary. 
The coarea formula implies that 
\begin{align*}
\int_\Omega \theta d|D\chi_{D_k } |=  \lim\limits_{j\to \infty}    \int_\Omega \theta d|Df_{k,j}|= \lim\limits_{j\to \infty}\int_0^1   \int_\Omega \theta d|D\chi_{E^t_{k,j}}| \, dt   
\geq    \int_0^1  \liminf\limits_{j\to \infty}\int_\Omega \theta d|D\chi_{E^t_{k,j}}| \, dt.
\end{align*}
Therefore, for some $t\in   S$,
$$
\liminf\limits_{j\to \infty}\int_\Omega \theta d|D\chi_{E^t_{k,j}}|  \leq \int_\Omega \theta d|D\chi_{D_k } |.
$$
Pick the subsequence $\{j_n\}_{n=1}^\infty$ such that
$$
\liminf\limits_{j\to \infty}\int_\Omega \theta d|D\chi_{E^t_{k,j}}|=\lim\limits_{n\to \infty}\int_\Omega \theta d|D\chi_{E^t_{k,j_n}}|.
$$ 
On the other hand, we can infer from the Chebyshev's Theorem that
$$
\cL^3(E^t_{k,j_n}\setminus D_k)\leq \frac{1}{t}\| f_{k,j_n}-\chi_{D_k}\|_{L^1} ,\quad \cL^3(D_k \setminus E^t_{k,j_n} )\leq \frac{1}{1-t}\| f_{k,j_n}-\chi_{D_k}\|_{L^1}.
$$
This implies that 
$$
\lim\limits_{n\to \infty} \| \chi_{E^t_{k,j_n}} - \chi_{D_k}\|_{L^1} =0.
$$
Therefore, it follows from \cite[Corollary~1]{MR2345913} that $\widetilde{D}_{k,n}=E^t_{k,j_n}$ satisfies \eqref{Dkj} with $D_{k,n}$ replaced by $\widetilde{D}_{k,n}$.

Assume that $\cH^2(\partial \chi_{\widetilde{D}_{k ,n}} \cap \partial \chi_{\widetilde{D}_{j,n}})>0$ for some $k,j\in \cK$ and $k\neq j$.
Since $\partial \widetilde{D}_{k ,n}$ is $C^2$, 
there exists some $\a>0$ such that $\partial \widetilde{D}_{k ,n}$   has a tubular neighborhood $B_\a(\partial \widetilde{D}_{k ,n})$  of width $\a>0$.
Denote by $\nu$   the outward unit normal of $\widetilde{D}_{k ,n}$ pointing into $\Omega\setminus \widetilde{D}_{k ,n}$.
Then the map defined by
$$
\Lambda : \partial{\widetilde{D}_{k ,n}}\times (-\a,\a)\to B_\a(\partial \widetilde{D}_{k ,n}): \,  (x,r)\mapsto x+r \nu (x),
$$
is a $C^1$-diffeomorphism.
Let $\Gamma_r:=\Lambda(\partial{\widetilde{D}_{k ,n}},r)$. Then, for every $i\in \cK$, there are at most countably many $r\in (-\a,\a)$ such that $\cH^2(\Gamma_r \cap \partial \chi_{\widetilde{D}_{i,n}})>0$. 
Hence we can find $r\in (-\a,\a)$ sufficiently close to $0$ such that 
$$
\cH^2(\Gamma_r \cap \partial \chi_{\widetilde{D}_{i,n}})=0,\quad \forall i\in \cK
$$
and  $\Gamma_r \subset \Ot$ and
$$
\left\|\chi_{D_{k,n}} - \chi_{\widetilde{D}_{k,n}} \right\|_{L^1} + \left| \int_\Omega \theta d|D\chi_{D_{k,n}}| - \int_\Omega \theta d|D \chi_{\widetilde{D}_{k,n}}| \right| <1/n,
$$
where $D_{k,n}$  is the region enclosed by $\Gamma_r $.
Modifying all $\widetilde{D}_{k,n}$, $k\in \cK$, in such a way yields a family of smooth sets $\{D_{k,n} \}_{k\in \cK}$ satisfying $\Oi  \subset D_{k,n} \subset \Omega\setminus \overline{\Omega}_{\rm e}    $ and Assumptions~\eqref{EASP1} and \eqref{Dkj}.
\end{proof}

The lemma above elucidates that within a given ensemble, slight adjustments can be made to the solute regions $\{D_k\}_{k\in \cK}$ in order to attain \eqref{EASP1}, while ensuring that the alterations in the associated solvation energy remain ``infinitesimally" negligible.
Consequently, it is permissible to consistently presume that $\{D_k\}_{k\in \cK}$ adheres to \eqref{EASP1}.
\begin{prop}
\label{Prop: surface area}
Assume  that $\{D_k\}_{k\in \cK}$ satisfies \eqref{EASP1}.
Then 
$$
\int_\Omega \theta \, d  |D u | =\sum_{k\in\cK}p_k \int_\Omega \theta   \, d  | D \chi_{D_k} |.
$$
\end{prop}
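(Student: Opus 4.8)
\emph{Sketch of approach.} The plan is to show that hypothesis \eqref{EASP1} forces the scalar total variation measure of $u$ to split \emph{exactly} as $|Du| = \sum_{k\in\cK} p_k\,|D\chi_{D_k}|$, after which integrating the weight $\theta$ against this measure identity immediately gives the claim. One inequality is free: since $u = \sum_{k\in\cK} p_k \chi_{D_k}$ is a finite convex combination, subadditivity of the weighted total variation --- or directly the dual formula \eqref{Def: weighted total variation}, via $\int_\Omega \theta\, d|Du| = \sup_\phi \int_\Omega u\,\nabla\cdot(\theta\phi)\,dx = \sup_\phi \sum_k p_k \int_\Omega \chi_{D_k}\,\nabla\cdot(\theta\phi)\,dx \le \sum_k p_k \int_\Omega \theta\, d|D\chi_{D_k}|$ --- yields $\int_\Omega \theta\, d|Du| \le \sum_{k\in\cK} p_k \int_\Omega \theta\, d|D\chi_{D_k}|$ with no hypotheses. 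All the content is in the reverse inequality, and it is precisely here that the almost-disjointness of the reduced boundaries enters.

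For the reverse inequality I would argue at the level of measures. By De Giorgi's structure theorem (see, e.g., \cite{MR2719171}), for each Caccioppoli set $D_k$ one has $D\chi_{D_k}(A) = \int_{A\cap\partial^* D_k}\nu_{D_k}\, d\cH^2$ for every Borel $A\subseteq\Omega$, where $\nu_{D_k}$ is the measure-theoretic outer unit normal (so $|\nu_{D_k}| = 1$ $\cH^2$-a.e.\ on $\partial^* D_k$) and $|D\chi_{D_k}|(A) = \cH^2(A\cap\partial^* D_k)$. Since $\cK$ is finite, $u\in BV(\Omega)$ and $Du = \sum_{k\in\cK} p_k D\chi_{D_k}$ as $\R^3$-valued Radon measures. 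Now put $N := \bigcup_{k\ne j}(\partial^* D_k \cap \partial^* D_j)$, a finite union of $\cH^2$-null sets by \eqref{EASP1}, so $\cH^2(N) = 0$; the sets $B_k := \partial^* D_k \setminus N$ are then pairwise disjoint Borel sets. Define the finite positive Radon measure $\lambda(A) := \sum_{k\in\cK} p_k\, \cH^2(A\cap\partial^* D_k)$ --- which satisfies $\lambda(N) = 0$, hence is concentrated on $\bigsqcup_k B_k$ --- and the bounded Borel field $g := \sum_{k\in\cK} \chi_{B_k}\,\nu_{D_k}$, which has $|g| = 1$ $\lambda$-a.e. Using $\cH^2(N) = 0$ together with the disjointness of the $B_k$, a short computation shows $Du(A) = \int_A g\, d\lambda$ for every Borel $A$, i.e.\ $Du = g\lambda$; since $\lambda \ge 0$ and $|g| = 1$ $\lambda$-a.e., this forces $|Du| = |g|\lambda = \lambda = \sum_{k\in\cK} p_k\,|D\chi_{D_k}|$. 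Finally $\theta$ is continuous and bounded on $\overline{\Omega}$, hence integrable against each of these finite measures, and integrating it against both sides (linearity of the integral in the measure) gives $\int_\Omega\theta\,d|Du| = \sum_{k\in\cK} p_k \int_\Omega\theta\,d|D\chi_{D_k}|$.

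The only delicate point I anticipate is the measure-theoretic bookkeeping: checking that discarding the $\cH^2$-null set $N$ really leaves each $D\chi_{D_k}$ untouched (true because $|D\chi_{D_k}|(N) = \cH^2(N\cap\partial^* D_k) = 0$), that $\lambda$ and $g$ are well-defined with the stated properties, and that $Du = g\lambda$. This is the one place where hypothesis \eqref{EASP1} is genuinely needed --- conceptually it makes the vector measures $p_k D\chi_{D_k}$ mutually singular, which is exactly what promotes the general subadditivity $|Du| \le \sum_{k\in\cK} p_k |D\chi_{D_k}|$ to an equality. Everything else is a routine application of standard facts from geometric measure theory.
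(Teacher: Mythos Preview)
Your proposal is correct and takes a genuinely different route from the paper's own proof. The paper establishes the nontrivial inequality $\int_\Omega \theta\,d|Du|\ge\sum_{k\in\cK}p_k\int_\Omega\theta\,d|D\chi_{D_k}|$ constructively via the dual formula \eqref{Def: weighted total variation}: using De Giorgi's structure theorem it exhausts each $\partial^*D_k$ (up to an $\varepsilon$-error in perimeter) by finitely many compact $C^1$-images, then invokes Urysohn's lemma and Stone--Weierstrass to manufacture an admissible test field $\psi_\varepsilon\in C^\infty_c(\Omega;\R^3)$ with $|\psi_\varepsilon|\le 1$ and $\psi_\varepsilon\cdot\nu_{D_k}\ge\frac{1-\varepsilon}{1+\varepsilon}$ on those compact pieces, and finally evaluates the supremum on this field and sends $\varepsilon\to 0$. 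Your argument instead works directly at the level of measures: because hypothesis \eqref{EASP1} makes the reduced boundaries pairwise $\cH^2$-almost disjoint, the vector measures $p_kD\chi_{D_k}$ are mutually singular, and the polar decomposition $Du=g\lambda$ with $|g|=1$ $\lambda$-a.e.\ yields the measure identity $|Du|=\lambda=\sum_{k\in\cK}p_k|D\chi_{D_k}|$ outright. Your approach is shorter and more structural, and in fact delivers a stronger conclusion (equality of the total variation \emph{measures}, not merely of their $\theta$-integrals); the paper's approach has the minor advantage of staying entirely within the dual-supremum framework and avoiding any appeal to the polar-decomposition characterization of $|\mu|$. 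One cosmetic point: the reference you cite for De Giorgi's structure theorem (\cite{MR2719171}) is not a geometric measure theory text; the paper itself cites \cite[Theorem~5.7.2]{MR1158660}.
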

\begin{proof}
By the De Giorgi's structure theorem, cf. \cite[Theorem~5.7.2]{MR1158660}, $\chi_{D_k}$ are 2-rectifiable, i.e. there exist Borel sets $F_k$ and $C^1$-functions $g_{k,n}: U_{k,n}\to \R^3$, $U_{k,n} \subset \R^2$ compact, such that $\|\partial D_k\|(F_k)=0$ and
$$
\partial^* D_k =F_k \cup \bigcup\limits_{n=1}^\infty g_{k,n}(U_{k,n}).
$$
Pick arbitrary $\varepsilon>0$.
For every $k\in \cK$, we can find $N_k=N_k(\varepsilon)$ such that
$$
\|\partial D_k\|(\Omega \cap \bigcup\limits_{n=N_k+1}^\infty g_{k,n}(U_{k,n}) )\leq \frac{\varepsilon}{K \theta_1 p_k}.
$$
Recall $K=|\cK|$.
Therefore, we obtain  compact sets 
$$
G_{k,\varepsilon}:= \bigcup\limits_{n=1}^{N_k(\varepsilon)} g_{k,n}(U_{k,n})  ,\quad  k\in \cK ,\quad \text{and}\quad G_\varepsilon:= \bigcup\limits_{k\in \cK}  G_{k,\varepsilon}.
$$
By the definition of the reduced boundary, the unit normal $\nu$ exists everywhere on $G_\varepsilon$. 
By the Urysohn's Lemma, there exists a continuous function $\phi:\R^3\to \R^3$ such that $\phi|_{G_\varepsilon}=\nu$. Restricting $\phi$ on $\overline{\Omega}$ and applying Stone-Weierstrass, we can find a smooth function $\phi_\varepsilon: \overline{\Omega}:\to \R^3$ such that  
$$
\|\phi_\varepsilon - \phi \|_{L^\infty(G_\varepsilon)} < \varepsilon.
$$
Then, in a neighborhood $H\Subset  \Omega$ of $G_\varepsilon$, we have  $|\phi_\varepsilon|>1/2$. Pick $h\in C^\infty_0(\Omega;[0,1])$ such   that $h\equiv 1$ in $H$. Setting 
$
\psi_\varepsilon(x)= h(x) \frac{\phi_\varepsilon(x)}{|\phi_\varepsilon(x)|},
$
it is an easy task to check that
$$
1\geq \psi_\varepsilon(x) \cdot \nu(x)\geq \frac{1-\varepsilon}{ 1+\varepsilon }, \quad x\in G_\varepsilon.
$$
By \cite[Lemma~1]{MR2345913}, we can  estimate
\begin{align*}
\int_\Omega \theta \, d|D u| \geq &  \int_\Omega   ( \theta \psi_\varepsilon ) \cdot dDu  =   \sum\limits_{k\in \cK} p_k \int_\Omega   \theta \psi_\varepsilon \cdot D \chi_{D_k}\, dx     
\geq   \sum\limits_{k\in \cK} p_k \int_{G_{k,\varepsilon}}   \theta \psi_\varepsilon \cdot D \chi_{D_k}\, dx     -  \varepsilon  \\
\geq & \sum\limits_{k\in \cK}   \frac{p_k (1-\varepsilon)}{ 1+\varepsilon }\int_{G_{k,\varepsilon}}   \theta \, d| D \chi_{D_k}| - \varepsilon \\
= & \sum\limits_{k\in \cK}   \frac{p_k (1-\varepsilon)}{ 1+\varepsilon } \left( \int_{\Omega}   \theta \, d| D \chi_{D_k}| -  \int_{\Omega \setminus G_{k,\varepsilon}}   \theta \, d| D \chi_{D_k}| \right) - \varepsilon \\
\geq & \sum\limits_{k\in \cK}  \frac{p_k (1-\varepsilon)}{ 1+\varepsilon } \int_{\Omega}   \theta \, d| D \chi_{D_k}| -\frac{2\varepsilon }{1+\varepsilon}.
\end{align*}
Since $\varepsilon$ is arbitrary, we have
$$
\int_\Omega \theta \, d  |D u | \geq \sum_{k\in\cK}p_k \int_\Omega \theta   \, d  | D \chi_{D_k} |.
$$
The inverse inequality is obvious. We have thus proved the assertion.
\end{proof}

\subsubsection{Total Energy Functional}\label{Section: total energy}

Based on Propositions~\ref{Prop: bulk enegy} and \ref{Prop: surface area}, 
the ensemble average  nonpolar energy can be formulated as
\begin{equation}
\label{nonpolar energy}
 \Inp(u) =    \int_\Omega \theta d|Du| + \int_\Omega \left[ P_h u   + \rho_s (1-u ) U^{\mathrm{vdW}}   \right] \, dx .  
\end{equation}
Therefore, the EASE functional is given by
\begin{align}\label{ensemble average total energy type II}
\hspace*{-2em}
\cE(u )  
=    \int_\Omega \theta d|Du| + \int_\Omega \left[ P_h u   + \rho_s (1-u ) U^{\mathrm{vdW}} \right] \, dx  
  + 
\int_\Omega   \left[    \rho \psi    - \frac{\epsilon(u) }{8\pi}|\nabla \psi|^2 - \chi_{\{u<1\}} B(\psi)         \right]\, dx
\end{align}
with admissible set $\cX$ and $\psi=\psi_u \in \cA$ is determined via the generalized PB equation~\eqref{GPBE2}.

\subsection{Size-modified Poisson-Boltzmann Theory}\label{Section: SMPB theory}

In the classic PB theory,  ionic solvents are assumed to be ideal: all ions and solvent molecules are point-like and there is no correlation between the (number) concentrations of these particles.
However, the ideal ionic solvent assumption overlooks the crucial  finite size effect of mobile ions and solvent molecules. 
Numerical simulations have demonstrated that the point-like ion assumption in classic PB theory can lead to nonphysically large ion concentrations near charged surfaces \cite{PhysRevLett.79.435}.

We will follow the idea in the pioneering work \cite{doi:10.1080/14786444208520813} by Bikerman  to derive the size-modified   polar energy formulation.
Assume that each  mobile ion  and   solvent molecule  occupies a finite volume $\sv$.
Building upon the definition of $\cC_0$ introduced in Section~\ref{Section: background}, we define  
$$
c_0^k= \bE(\cC_0| X=k) \quad \text{and}\quad c_0= \bE(\cC_0)= \sum_{k\in \cK} p_k c_0^k.
$$
It is assumed that inside the solvent phase, apart from the occupancy of ions, the remaining space is filled with solvent molecules. In this context, the following relation holds: 
\begin{equation}
\label{total concentration}
 \sv \sum_{j=0}^{N_c} c_j^k(x)= \chi_{U_k}(x) , \quad  x\in \Omega .
\end{equation}
This relation enforces a maximum concentration of $\sv^{-1}$ for each ion species. 
\eqref{total concentration} implies that
\begin{equation}
\label{solvent concentration}
c_0=  \sum_{k\in \cK} p_k c_0^k =  \sum_{k\in \cK} p_k  \left( \sv^{-1}- \sum_{j=1}^{N_c} c_j^k \right) \chi_{U_k}   
=  \sv^{-1} (1-u) - \sum_{j=1}^{N_c} c_j,
\end{equation}
Following \cite{doi:10.1080/14786444208520813}, we can include the solvent entropic contribution and then the size-modified entropy $S_{md}$ can be written as
\begin{equation}
\label{entropy-steric}
\begin{split}
-TS_{md}=&  \beta^{-1} \int_{\{ u<1 \}} \sum_{j=0}^{N_c} c_j \left[  \ln (\sv c_j ) -1 \right] \, dx \\
=& \beta^{-1} \int_{\{ u<1 \}} \sum_{j=1}^{N_c} c_j \left[  \ln (\sv c_j ) -1\right] \, dx  \\
& + \beta^{-1} \int_{\{ u<1 \}} \left(\sv^{-1} (1-u) - \sum_{j=1}^{N_c} c_j\right) \left[  \ln \left((1-u) -  \sv\sum_{j=1}^{N_c} c_j \right) -1\right] \, dx .
\end{split}
\end{equation}
See also \cite{PhysRevLett.79.435, BORUKHOV2000221, PhysRevE.75.021502, PhysRevE.75.021503} for some related work. 

Following the discussion in Remark~\ref{Rmk: PBE}, the variation of $H_{md}=\langle E_\psi \rangle - TS_{md}$, cf. \eqref{Def: internal energy ensemble}, with respect to $c_j$, $j=1,\cdots, N_c$,  gives
$$
\mu_j^\infty=   q_j \psi  + \beta^{-1} \left[ \ln (\sv c_j) -  \ln \left((1-u) -  \sv\sum_{i=1}^{N_c} c_i \right) \right]\chi_{\{u<1 \}}  .
$$
This leads to the relation
\begin{align*}
\frac{  c_j}{(1-u) -  \sv\sum_{i=1}^{N_c} c_i} = \chi_{\{u<1\}}  \sv^{-1} e^{\beta\mu_j^\infty}  e^{-\beta q_j \psi}.
\end{align*}
When $x$ is sufficient far away from the solute, we have $u(x)=0$, $\psi(x)=0$ and $c_j (x)=c_j^\infty$ with $j\in \{0,1,\cdots, N_c\}$,   where $c_0^\infty $ is the (constant) bulk concentration of the solvent molecule.
This implies that $c_j^\infty= c_0^\infty  e^{\beta\mu_j^\infty} $, where we have used the relation~\eqref{solvent concentration}.
\begin{equation}
\label{steric 1}
\frac{  c_j}{(1-u) -  \sv\sum_{i=1}^{N_c} c_i} = \chi_{\{u<1\}}   h_j e^{-\beta q_j \psi}, \quad \text{where } h_j =\frac{c_j^\infty}{\sv c_0^\infty }.
\end{equation}
Summing over $j=1,\cdots, N_c$, one can find out that
\begin{align*}
 \sum_{j=1}^{N_c} c_j =& (1-u) \frac{\chi_{\{u<1\}}  \sum_{j=1}^{N_c}  h_j e^{-\beta q_j \psi}}{1+ \chi_{\{u<1\}}  \sv \sum_{j=1}^{N_c}  h_j e^{-\beta q_j \psi}} \\
 =& (1-u) \frac{  \sum_{j=1}^{N_c}  c_j^\infty e^{-\beta q_j \psi}}{\sv c_0^\infty  +    \sv \sum_{j=1}^{N_c}   c_j^\infty e^{-\beta q_j \psi}} .
\end{align*}
Here, we have utilized the relation $(1-u)\chi_{\{u<1\}}=1-u$.
Note that the above equality implies that
\begin{equation}
\label{Kp}
1-u- \sv \sum_{j=1}^{N_c} c_j  = \frac{(1-u)\sv c_0^\infty }{\sv c_0^\infty +    \sv \sum_{j=1}^{N_c}   c_j^\infty e^{-\beta q_j \psi}}.
\end{equation}
Plugging this expression into \eqref{steric 1} yields
\begin{equation}
\label{Boltzmann distribution-steric}
c_j = (1-u)  \frac{ c_j^\infty e^{-\beta q_j \psi}}{\sv c_0^\infty +    \sv \sum_{i=1}^{N_c}   c_i^\infty e^{-\beta q_i \psi}}.
\end{equation}
Using \eqref{Boltzmann distribution-steric}, the size-modified Helmholtz free energy $H_{md}$ can be reformulated as
\begin{align*}
 H_{md}
=& \int_\Omega \left[ \rho \psi  - \frac{\epsilon(u)}{8\pi} |\nabla \psi|^2 + \beta^{-1} \sv^{-1} (1-u) \left( \ln \left( \frac{(1-u)\sv c_0^\infty }{\sv c_0^\infty +    \sv \sum_{j=1}^{N_c}   c_j^\infty e^{-\beta q_j \psi}} \right) -1  \right) \right]\, dx.
\end{align*}
A constant term (with respect to fixed $u$) 
\begin{equation}
\label{reference state ensemble}
 \beta^{-1} \sv^{-1} (1-u) \ln \left(  \sv c_0^\infty 	+    \sv \sum_{j=1}^{N_c}   c_j^\infty   \right) -  \beta^{-1} \sv^{-1} (1-u)  [\ln ((1-u)\sv c_0^\infty) -1]
\end{equation}
needs to be added to $H_{md}$ to adjust the reference state of the zero energy in the grand canonical ensemble under consideration.
We finally arrive at the following 
size-modified polar energy functional:
\begin{align}\label{ensemble average polar formulation-steric}
\begin{split}
\Ipd(u,\psi) =   \int_\Omega   \left[    \rho \psi    - \frac{\epsilon(u) }{8\pi}|\nabla \psi|^2 - (1-u)  B_{md} (\psi)       \right]\, dx,  
\end{split}
\end{align}
where the function $B_{md}:\R\to \R$ is defined by
\begin{equation}
\label{B steric}
B_{md} (s)=\beta^{-1} \sv^{-1} \ln \left( \frac{\sv c_0^\infty + \sv \sum_{j=1}^{N_c} c_j^\infty e^{-\beta q_j s}}{\sv c_0^\infty+ \sv \sum_{j=1}^{N_c} c_j^\infty } \right).
\end{equation}
Direct computations show
\begin{align*}
B_{md}' (s)= - \frac{\sum_{j=1}^{N_c} c_j^\infty q_j  e^{-\beta q_j s}}{\sv c_0^\infty + \sv \sum_{j=1}^{N_c} c_j^\infty   e^{-\beta q_j s}}
\end{align*}
and
\begin{align*}
B_{md}''(s)= \frac{\sv \beta c_0^\infty \sum_{j=1}^{N_c} c_j^\infty q_j^2 e^{-\beta q_j s} + \sv \beta \left( \sum_{j=1}^{N_c} c_j^\infty q_j^2 e^{-\beta q_j s} \right) \left( \sum_{j=1}^{N_c} c_j^\infty   e^{-\beta q_j s} \right) -\sv \beta \left( \sum_{j=1}^{N_c} c_j^\infty q_j  e^{-\beta q_j s} \right)^2 }{ \left( \sv c_0^\infty + \sv \sum_{j=1}^{N_c} c_j^\infty   e^{-\beta q_j s} \right)^2}.
\end{align*}
The Cauchy-Schwartz inequality shows that $B_{md}''(s)> 0$ for all $s\in \R$. Therefore, $B_{md}$ is strictly convex.
It follows from \eqref{neutral} that $B_{md}'(0)=0 $ and thus $0=B_{md}(0)=\min_{s\in \R} B_{md}(s)$.
Plugging \eqref{Boltzmann distribution-steric} into \eqref{Poisson eq grand} gives the size-modified PB equation
\begin{equation}
\label{SMPBE}
\begin{split}
\nabla \cdot [  \epsilon(u) \nabla \psi ] -4\pi(1-u)B_{md}'(\psi) + 4\pi \rho =0 \quad \text{in }\Omega. 
\end{split}
\end{equation}
Lemma~\ref{Lem: GPBE estimate} shows that, for any fixed $u\in \cX$, $\psi\in \cA$ solves \eqref{SMPBE} iff it maximizes $\Ipd(u,\cdot)$.

Define $\sL_{md}: \cX\to \R$ by 
$$
 \sL_{md}(u )=  \int_\Omega \left[ P_h u + \rho_s (1-u)  U^{\mathrm{vdW}}) \right]\, dx  + \Ipd(u,\psi_u) ,
$$
where $\psi_u$   maximizes $\Ipd(u,\cdot)$ in $\cA$.
Following the proof of Lemma~\ref{Lem: modify sets}, we can prove a similar result for $\sL_{md}$.
\begin{lem}\label{Lem: modify sets-size}
Let $\{D_k\}_{k\in \cK}$
be a family of    Caccioppoli sets satisfying $\Oi  \subset D_k \subset \Omega\setminus \overline{\Omega}_{\rm e}    $  and $p_k\in [0,1]$ with $\sum_{k\in \cK} p_k=1$. 
Then for each $\varepsilon>0$,
there exists another  family $\{\widetilde{D}_k \}_{k\in \cK}$ of Caccioppoli sets  satisfying $\Oi  \subset \widetilde{D}_k \subset \Omega\setminus \overline{\Omega}_{\rm e}    $ and
\begin{align*}
\cH^2(\partial^* \widetilde{D}_k \cap \partial^* \widetilde{D}_j )=0 ,\quad \forall k,j\in \cK,\, \, k\neq j.
\end{align*}
Moreover,        
\begin{equation}\label{eq: modify energy}
\left| \sum\limits_{k\in \cK}p_k \int_\Omega \theta d |D \chi_{D_k}|  +   \sL_{md} (u  )  - \sum\limits_{k\in \cK}p_k \int_\Omega \theta d |D \chi_{\widetilde{D}_k}|   - \sL_{md} (\widetilde{u} )     \right|<\varepsilon , 
\end{equation}
where 
$u=\sum\limits_{k\in \cK}p_k \chi_{D_k}$ and $\widetilde{u}=\sum\limits_{k\in \cK}p_k \chi_{\widetilde{D}_k}.$
\end{lem}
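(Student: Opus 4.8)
The plan is to repeat the proof of Lemma~\ref{Lem: modify sets} essentially verbatim, exploiting the fact that the construction of the modified family $\{\widetilde D_k\}_{k\in\cK}$ depends only on the sets $D_k$ and on the fixed geometric data $\Oi,\Oe,\Ot,\Sigma_0,\Sigma_1$, never on which polar functional is attached. First I would reduce the statement, via continuity of $\sL_{md}$ along suitable $L^1$-convergent sequences, to the task of producing for each $k\in\cK$ a sequence $\{D_{k,n}\}_{n=1}^\infty$ of Caccioppoli sets with $\Oi\subset D_{k,n}\subset\Omega\setminus\overline{\Omega}_{\rm e}$ satisfying the disjoint-reduced-boundary condition \eqref{EASP1} and the two limits in \eqref{Dkj}. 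Then I would build such a sequence exactly as in Lemma~\ref{Lem: modify sets}: (i) use Lemma~\ref{tech lem 2} and the tubular neighbourhoods of $\Sigma_0,\Sigma_1$ to push $D_k$ so that $\Oi\cup B_r(\Sigma_1)\subset D_k\subset\Omega\setminus(\overline{\Omega}_{\rm e}\cup B_r(\Sigma_0))$; (ii) mollify $\chi_{D_k}$ to obtain $f_{k,j}\in C^\infty(\overline{\Omega})$ with $0\le f_{k,j}\le1$, $f_{k,j}\to\chi_{D_k}$ in $L^1$, and $\int_\Omega\theta\,d|Df_{k,j}|\to\int_\Omega\theta\,d|D\chi_{D_k}|$ (Claim~1 in the proof of Lemma~\ref{Lem: modify sets}); (iii) invoke Sard's theorem and the coarea formula to select a regular value $t$ and a subsequence for which $E^t_{k,j_n}=\{f_{k,j_n}>t\}$ has smooth boundary, $\liminf_n\int_\Omega\theta\,d|D\chi_{E^t_{k,j_n}}|\le\int_\Omega\theta\,d|D\chi_{D_k}|$, and, by Chebyshev's inequality, $\chi_{E^t_{k,j_n}}\to\chi_{D_k}$ in $L^1$; and (iv) replace each $\partial E^t_{k,j_n}$ by a nearby parallel level surface $\Gamma_r\subset\Ot$, produced by the tubular-neighbourhood diffeomorphism $\Lambda(x,r)=x+r\nu(x)$, so that $\cH^2(\Gamma_r\cap\partial E^t_{i,j_n})=0$ for every $i\in\cK$ while the $L^1$-norm and the weighted total variation move by at most $1/n$.

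The only point where the present lemma genuinely differs from Lemma~\ref{Lem: modify sets} is this reduction step: I must verify that if $u_n:=\sum_{k\in\cK}p_k\chi_{D_{k,n}}\to u$ in $L^1(\Omega)$ and $\int_\Omega\theta\,d|D\chi_{D_{k,n}}|\to\int_\Omega\theta\,d|D\chi_{D_k}|$ for every $k$, then $\sL_{md}(u_n)\to\sL_{md}(u)$. I would obtain this from Lemma~\ref{Lem: GPBE converge}, which applies equally to the functional $\Ipd$ and supplies convergence of the maximizers $\psi_{u_n}\to\psi_u$, together with the dominated convergence theorem, exactly as in the classic case. I expect this step to be in fact slightly easier here, because in $\Ipd(u,\psi)$ the nonlinearity is $(1-u)B_{md}(\psi)$, so I only need $1-u_n\to1-u$ in $L^1$ (and a.e.\ along a subsequence, with a uniform bound) rather than the convergence $\chi_{\{u_n<1\}}\to\chi_{\{u<1\}}$ that the classic functional demands; moreover the properties of $B_{md}$ recorded above --- strict convexity, $B_{md}(0)=0=\min_{s\in\R}B_{md}(s)$, and at most linear growth at infinity --- make the limit passage routine. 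With this continuity in hand, I would obtain the claimed inequality for $\sL_{md}$ exactly as the corresponding bound for $\sL$ was obtained in Lemma~\ref{Lem: modify sets}: split $u=\sum_{k\in\cK}p_k\chi_{D_k}$ and $\widetilde u=\sum_{k\in\cK}p_k\chi_{\widetilde D_k}$ over the finitely many indices and arrange that each index's contribution is at most $\varepsilon/K$.

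The hardest part will be step~(iv), just as in Lemma~\ref{Lem: modify sets}: after (iii) each $\partial E^t_{k,j_n}$ is a smooth hypersurface, but distinct $k$'s may still overlap on a set of positive $\cH^2$-measure, and one must perturb all $K$ of them simultaneously so that every pairwise intersection becomes $\cH^2$-null, without violating $\Oi\subset D_{k,n}\subset\Omega\setminus\overline{\Omega}_{\rm e}$ and with only a negligible change of energy. I would handle this index by index: fixing $k$, the parallel surfaces $\Gamma_r=\Lambda(\partial E^t_{k,j_n},r)$ foliate a tubular neighbourhood, and for each fixed $i$ only countably many $r$ can force $\cH^2(\Gamma_r\cap\partial E^t_{i,j_n})>0$, so a generic small $r$ avoids all $i$ at once; since $\cK$ is finite, iterating over $k$ yields the full family, and since $r$ may be taken arbitrarily small, Lemma~\ref{tech lem 2} and the continuity established in the previous paragraph keep the total change in $\sum_{k\in\cK}p_k\int_\Omega\theta\,d|D\chi_{D_k}|+\sL_{md}(u)$ below $\varepsilon$. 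Since none of this uses the form of the polar energy, it carries over to $\sL_{md}$ without modification.
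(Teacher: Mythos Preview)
Your proposal is correct and follows exactly the paper's approach: the paper simply states that Lemma~\ref{Lem: modify sets-size} is proved ``following the proof of Lemma~\ref{Lem: modify sets}'', and you have spelled out precisely that reduction, including the one point that changes (the continuity of $\sL_{md}$ via Lemma~\ref{Lem: GPBE converge}, which is indeed easier here since $(1-u_n)\to(1-u)$ in $L^1$ replaces the more delicate $\chi_{\{u_n<1\}}\to\chi_{\{u<1\}}$).
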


By virtue of Lemma~\ref{Lem: modify sets-size} and Proposition~\ref{Prop: surface area}, the   size-modified EASE functional can be expressed as
\begin{equation}
\label{ensemble average solvation energy-steric}
\begin{split}
  \cE_{md}(u)   =  \int_{\Omega} \theta d|Du| +    \int_\Omega \!  \left[   P_h u    +  (1-u ) \left(\rho_s U^{\mathrm{vdW}}  - B_{md}(\psi) \right) +  \rho \psi    - \frac{\epsilon(u) }{8\pi} |\nabla \psi|^2             \right]  dx 
\end{split}
\end{equation}
with admissible set $\cX$ and $\psi=\psi_u \in \cA$ solves the size-modified PB equation~\eqref{SMPBE}.

\section{Analysis of Ensemble Average  Solvation Energy}\label{Secion: analysis of ensemble average energy}

We will perform some preliminary analysis for the EASE models developed in the previous section. 
The following example shows that $\cE$ is not l.s.c. with respect to   convergence in $W^{1,1}(\Omega)$.

\begin{example}
We take $\psi_D \equiv 1$, $\Oi=B(0,1)$, $\Ot= \{x\in \R^3: 1<|x|<2\}$ and  $\Oe= \{x\in \R^3: 2<|x|<3\}$. Choose $u\in \cX \cap C^\infty(\Omega)$ such that $u(x)=1$ for $|x|\leq \frac{5}{3}$ and $u(x)<1$ elsewhere. We can construct a sequence $u_n \in \cX\cap W^{1,1}(\Omega)$ such that $u_n\to u$ in $W^{1,1}(\Omega)$ satisfying
\begin{itemize}
\item $u_n(x) <1$ for all $|x|>1$,
\item $u_n(x)=\frac{n-1}{n}$ for $\frac{4}{3}\leq |x| \leq \frac{5}{3}$.
\end{itemize}
For instance, we can define $u_n(x)= \frac{n+3}{n}- \frac{3}{n}|x|$ for $1\leq |x|\leq \frac{4}{3}$ and $u_n(x)= \frac{n-1}{n}u(x)$ for $|x|>\frac{4}{3}$.
Direct computations show that 
$$
\lim\limits_{n\to \infty} \Inp(u_n)= \Inp(u).
$$
However, Lemma~\ref{Lem: GPBE converge} shows that
\begin{equation}
\label{ellipticcontinuation}
\lim\limits_{n\to \infty} \Ip(u_n,\psi_{u_n}) =    
\int_\Omega   \left[    \rho \psi_*    - \frac{\epsilon(u) }{8\pi}|\nabla \psi_*|^2 - \chi_{\{|x|>1\}} B(\psi_*)         \right]\, dx \leq  \Ip(u, \psi_*) \leq \Ip(u,\psi_u),
\end{equation}
where  $\psi_u\in \cA$  is the solution of \eqref{GPBE2} and $\psi_{u_n}$ is the solution with $u$ replaced by $u_n$.
Additionally, $\psi_*$ is the solution of 
\begin{equation}
 \nabla \cdot [  \epsilon(u) \nabla \psi ] -4\pi \chi_{\{|x|>1\}}B'(\psi) + 4\pi \rho=0 \quad \text{in }\Omega.	
\end{equation}
Note that equalities hold in \eqref{ellipticcontinuation} iff $\psi_u=\psi_*$ and $\psi_*(x)=0$ for a.a. $1<|x|\leq \frac{5}{3}$. 
If the second condition holds, then $\psi_*$ solves
\begin{equation}
\label{PBE-UC}
\left\{\begin{aligned}
 \nabla \cdot [  \epsilon(u) \nabla \psi ] -4\pi B'(\psi) &=0  &&\text{in}&& \{ 1<|x|<3\} ,\\
\psi&=0 &&\text{on}&& \{ |x|=1\} ,\\
\psi & = 1 &&\text{on}&& \partial\Omega. \\
\end{aligned}\right.
\end{equation} 
Then it follows from the elliptic unique continuation theorem, c.f. \cite[Theorem~1.4]{Choullielliptic}, that $\psi_*\equiv 0$, which contradicts with the fact that $\psi_*=1$ on $\partial\Omega$.
\end{example}

The analysis above reveals that the absence of lower semicontinuity in $\cE$ arises from the discontinuous term $\chi_{\{u<1 \}}$. Minimizing non-lower semicontinuous functionals poses a significant challenge.
Particularly, the  computational and analytical complexities associated with the discontinuous term further compound the difficulty. Consequently, the minimization of $\cE$ remains an open problem.

In contrast, the  discontinuous term $\chi_{\{u<1\}}$ is obviated in $\cE_{md}$ through the relation $(1-u)\chi_{\{u<1\}}=1-u$.
This elimination allows us to establish the lower semicontinuity of $\cE_{md}$, as demonstrated in the proof of Theorem~\ref{Thm: wellposedness} below. 
This paves the way for establishing additional analytic properties of $\cE_{md}$.
Consequently, our focus in the remainder of this article will center on the analysis of the size-modified EASE functional $\cE_{md}$.
The analyses in the following two subsections underscore  the advantages of incorporating finite size effects in EASE modeling from an analytical perspective.

\subsection{Existence of Minimizer}

\begin{theorem}
\label{Thm: wellposedness}
$\cE_{md} $ has a global minimizer in $\cX$.
\end{theorem}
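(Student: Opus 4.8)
The plan is to use the direct method in the calculus of variations. Take a minimizing sequence $\{u_n\}\subset\cX$ for $\cE_{md}$, i.e. $\cE_{md}(u_n)\to\inf_{\cX}\cE_{md}$. First I would establish a lower bound and coercivity. Since $0\le u_n\le 1$ and $\theta\ge\theta_0>0$, the constant and linear bulk terms $\int_\Omega[P_h u_n+(1-u_n)\rho_s U^{\mathrm{vdW}}]\,dx$ are uniformly bounded (using $U^{\mathrm{vdW}}\in L^1$ near the atoms, or more precisely that it is bounded away from the singularities since $\Oi$ contains the balls $\overline B(x_i,\sigma_i)$). For the polar part, I would invoke Lemma~\ref{Lem: GPBE estimate} and the properties of $B_{md}$ (strict convexity, $B_{md}\ge 0$, $B_{md}(0)=0$) to bound $\Ipd(u_n,\psi_{u_n})$ from below uniformly in $n$; the energy of the maximizer $\psi_{u_n}$ is controlled since one can always test with a fixed competitor such as the harmonic extension of $\psi_D$, giving $\Ipd(u_n,\psi_{u_n})\ge \Ipd(u_n,\text{fixed competitor})\ge -C$ using the uniform bounds $\epsilon_p\le\epsilon(u_n)\le\epsilon_s$. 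Combining these, $\sup_n\int_\Omega\theta\,d|Du_n|<\infty$, so $\{u_n\}$ is bounded in $BV(\Omega)$ (the $L^1$ bound is automatic from $0\le u_n\le 1$ and $|\Omega|<\infty$).

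By $BV$-compactness, pass to a subsequence (not relabeled) with $u_n\to u$ in $L^1(\Omega)$ and a.e., and $Du_n\rightharpoonup Du$ weakly-$*$ as measures, for some $u\in BV(\Omega)$. The constraints pass to the limit: $u(x)\in[0,1]$ a.e. and $u=1$ a.e. on $\Oi$, $u=0$ a.e. on $\Oe$ follow from a.e. convergence, so $u\in\cX$. Next I would pass to the limit term by term. The weighted total variation $u\mapsto\int_\Omega\theta\,d|Du|$ is lower semicontinuous with respect to $L^1$ convergence — this is exactly the content of \cite[Corollary~1]{MR2345913} cited earlier (or follows from the sup-representation \eqref{Def: weighted total variation}). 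The linear bulk terms $\int_\Omega[P_h u+(1-u)\rho_s U^{\mathrm{vdW}}]\,dx$ are continuous under $L^1$ convergence since $P_h$ is constant and $\rho_s U^{\mathrm{vdW}}\in L^1(\Omega)$ with $u_n$ uniformly bounded (dominated convergence). The crucial point is the polar term $\Ipd(u_n,\psi_{u_n})$: here I would appeal to Lemma~\ref{Lem: GPBE converge} (already used in Example above and in Lemma~\ref{Lem: modify sets-size}), which should give that $u_n\to u$ in $L^1$ implies $\Ipd(u_n,\psi_{u_n})\to\Ipd(u,\psi_u)$, i.e. actual continuity of $u\mapsto\Ipd(u,\psi_u)$ along the sequence. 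Note the key structural simplification already emphasized in the text: in $\cE_{md}$ the bad discontinuous factor $\chi_{\{u<1\}}$ has been absorbed via $(1-u)\chi_{\{u<1\}}=1-u$, so $B_{md}$ appears multiplied by the continuous coefficient $1-u_n\to 1-u$ in $L^1$, and $\epsilon(u_n)\to\epsilon(u)$ in $L^1$ (indeed boundedly a.e.), which is what makes Lemma~\ref{Lem: GPBE converge} applicable — this is precisely where the finite-size model beats the classic one, where no such continuity holds (cf. the Example).

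Putting the pieces together: $\cE_{md}(u)\le\liminf_n\cE_{md}(u_n)=\inf_{\cX}\cE_{md}$, and since $u\in\cX$ the reverse inequality is trivial, so $u$ is a global minimizer. The main obstacle I anticipate is the careful handling of the polar term — specifically confirming that Lemma~\ref{Lem: GPBE converge} delivers convergence (not merely semicontinuity) of $\Ipd(u_n,\psi_{u_n})$, which in turn requires controlling the PB solutions $\psi_{u_n}$: one needs uniform $H^1(\Omega)$ bounds on $\psi_{u_n}$ (from the energy estimate via Lemma~\ref{Lem: GPBE estimate} and the uniform ellipticity $\epsilon_p\le\epsilon(u_n)\le\epsilon_s$), then extraction of a weakly convergent subsequence $\psi_{u_n}\rightharpoonup\psi$ in $H^1$, identification of the limit as $\psi_u$ by passing to the limit in the weak form of \eqref{SMPBE} (using $\epsilon(u_n)\nabla\psi_{u_n}\rightharpoonup\epsilon(u)\nabla\psi$ and strong $L^1$ convergence of the coefficient $1-u_n$ together with continuity and growth control of $B_{md}'$), plus a lower/upper semicontinuity argument for the quadratic and entropic pieces of $\Ipd$. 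A secondary technical point is ensuring $\rho_s U^{\mathrm{vdW}}\in L^1(\Omega)$ and bounded on the support where $u_n<1$, which is guaranteed since the Lennard–Jones singularities lie inside $\Oi\subset\{u_n=1\}$, so $(1-u_n)U^{\mathrm{vdW}}$ is actually bounded uniformly.
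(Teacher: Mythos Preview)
Your proposal is correct and follows essentially the same route as the paper: the direct method, with lower semicontinuity assembled from \cite[Corollary~1]{MR2345913} for the weighted total variation, dominated convergence for the bulk terms, and Lemma~\ref{Lem: GPBE converge} for the polar part (continuity, not merely semicontinuity, of $u\mapsto\Ipd(u,\psi_u)$ along $L^1$-convergent sequences in $\cX$). The one notable difference is that the paper's proof opens by establishing convexity of $\cE_{md}$ on $\cX$ (via the inequality $\Ipd(u_i,\psi_{u_t})\le\Ipd(u_i,\psi_{u_i})$ from the maximization characterization), whereas you skip convexity and instead spell out the coercivity/compactness step explicitly; since convexity is not actually needed for the existence statement, your version is arguably cleaner for the stated theorem, while the paper's version records an additional structural property of $\cE_{md}$.
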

\begin{proof}
First, we will show that $\cE_{md}$ is convex.  
Indeed, given any $u_0,u_1 \in \cX $, let $u_t= t u_0 + (1-t)u_1$ for $t\in [0,1]$. We have $u_t\in \cX$ for all $t\in [0,1]$. 
For any  $v\in \cX$, let $\psi_v$ be the solution  of \eqref{SMPBE} in $\cA$ with $u$ replaced by $v$.
Note that, for any fixed $\psi\in \cA$, both $\Inp(\cdot)$ and $\Ipd(\cdot, \psi)$ are convex.
Then it follows from Lemma~\ref{Lem: GPBE estimate}  that
\begin{align*}
\cE_{md}(u_t) & =  \Inp(u_t)+ \Ipd(u_t, \psi_{u_t})   \\
&\leq    t \left[ \Inp(u_0)+\Ipd(u_0, \psi_{u_t}) \right]  + (1-t) \left[ \Inp(u_1)+\Ipd(u_1, \psi_{u_t}) \right]   \\
&\leq  t \left[ \Inp(u_0)+\Ipd(u_0, \psi_{u_0}) \right]  + (1-t) \left[ \Inp(u_1)+ \Ipd(u_1, \psi_{u_1}) \right] \\
& =t \cE(u_0) + (1-t) \cE(u_1). 
\end{align*}
The lower semicontinuity of $\cE_{md}$ in $\cX$ with respect to convergence in $L^1(\Omega)$  is a direct conclusion from the dominated convergence theorem,
\cite[Corollary~1]{MR2345913}, and Lemma~\ref{Lem: GPBE converge}.
Now the existence of a minimizer of $\cE_{md}$ in $\cX$   can be immediately obtained by the direct method of Calculus of  Variation.
\end{proof}

\begin{remark}\label{Rmk: Lip wellposedness}
The assertion of Theorem~\ref{Thm: wellposedness} remains valid if we merely assume that $\Sigma_0$ and $\Sigma_1$ are Lipschitz continuous.
\end{remark}

\subsection{Continuous dependence on the constrains}

Let $\srE=\min_{u\in \cX}\cE_{md}(u)$. 
Since $\srE$ depends on  Constraint~\eqref{constrain 2} and thus on $\Sigma_0$ and $\Sigma_1$, we will   justify the robustness of \eqref{ensemble average solvation energy-steric}  by demonstrating that $\srE$  depends continuously   on   $\Sigma_j$ with $j=0,1$  in a proper topology. Without loss of generality, we may assume that $\Sigma_j$ are connected. The general case can be proved in a similar manner.


For $k=1,2$, let $\mathcal{MH}^k$ be the set of  pairs $(\Sigma_0,\Sigma_1)$ of compact, connected and oriented $C^k$-hypersurfaces without boundary contained in the open set $\Omega$, which induces a decomposition $(\Oi, \Ot,\Oe)$ as in Figure~\ref{fig:sharp}(B). 
The orientation  of $\Sigma_0$ ($\Sigma_1$, resp.) is so taken that its outward unit normal vector field  $\nu_{\Sigma_0}$ ($\nu_{\Sigma_1}$, resp.)    points into $\Ot$.
To summarize, the open sets   $(\Oi, \Ot,\Oe)$ satisfy the following properties.
\begin{itemize}
\item $(\Oi, \Ot,\Oe)$ are connected so that $\Sigma_0 \cap \Sigma_1 =\emptyset$.
\item   $ \bigcup_{i=1}^{N_m} \overline{B}(x_i,\sigma_i) \subset  \Oi \Subset   \Omega\setminus \overline{\Omega}_{\rm e}$. 
\item $\partial\Omega \subset \partial\Oe$.
\end{itemize}
Theorem~\ref{Thm: wellposedness} and Remark~\ref{Rmk: Lip wellposedness} show that  $\srE$ can be considered  a functional defined on $\mathcal{MH}^1$. 
In the rest of this section, we  denote $\cE_{md}$, $\cX$, and $(\Oi, \Ot,\Oe)$ by  $\cE_{md,\Gamma}$, $\cX_{\Gamma}$, and $(\Oig, \Otg,\Oeg)$ to indicate their dependence on $\Gamma=(\Sigma_0,\Sigma_1)\in \mathcal{MH}^1$. 
In Theorem~\ref{Thm: continuous dep on constraints} below, it will be shown that $\srE$ is indeed continuous while restricted to  $\mathcal{MH}^2$.  
To this end, we will first demonstrate that $\mathcal{MH}^k$ are metric spaces.

The normal bundle of a compact, connected and oriented $C^1$-hypersurface $\Sigma$ without boundary is given by
$$
\cN^1   \Sigma=\{(\q, \nu_{\Sigma}(\q) ) : \, \q\in \Sigma \}\subset \R^3\times \R^3   ,
$$
where $\nu_{\Sigma}(\q)$ is the outward unit normal of $\Sigma$ at $\q\in\Sigma$.
If in addition, $\Sigma$ is $C^2$, its second normal bundle can be defined as      
$$
\cN^2   \Sigma=\{(\q, \nu_{\Sigma}(\q), \nabla_{\Sigma}\nu_{\Sigma}(\q) ) : \, \q\in \Sigma \}\subset \R^3\times \R^3  \times \R^6,
$$
where $\nabla_{\Sigma}$ is the surface gradient of $\Sigma$.
Recall that the Hausdorff metric on compact subsets $K\subset\R^n$  is defined by
$$
d_{\cH}(K_1,K_2)=\max\{ \sup\limits_{x\in K_1}d(x,K_2), \sup\limits_{x\in K_2}d(x,K_1) \}.
$$
We can equip $\mathcal{MH}^k$ with the metric
$$
d_{\mathcal{MH}^k}( (\Sigma_0,  \Sigma_1), (\widetilde{\Sigma}_0,  \widetilde{\Sigma}_1) )=d_{\cH}(\cN^k   \Sigma_0, \cN^k   \widetilde{\Sigma}_0) +d_{\cH}(\cN^k  \Sigma_1 , \cN^k   \widetilde{\Sigma}_1)  , \quad (\Sigma_0,  \Sigma_1), (\widetilde{\Sigma}_0,  \widetilde{\Sigma}_1) \in \mathcal{MH}^k .
$$
Following \cite[Chapter 2]{MR3524106}, one can show that $\mathcal{MH}^2 $ equipped with $d_{\mathcal{MH}^2}$ is a Banach manifold.
The   theorem below is the main result of this section.
\begin{theorem}\label{Thm: continuous dep on constraints}
$\srE\in C (\mathcal{MH}^2)$.
\end{theorem}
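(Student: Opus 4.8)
The plan is to prove continuity of $\srE$ at an arbitrary $\Gamma=(\Sigma_0,\Sigma_1)\in\mathcal{MH}^2$ by a two-sided estimate: given $\Gamma_n\to\Gamma$ in $d_{\mathcal{MH}^2}$, I would show both $\limsup_n \srE_{\Gamma_n}\le \srE_\Gamma$ and $\liminf_n \srE_{\Gamma_n}\ge \srE_\Gamma$. The essential difficulty is that the admissible sets $\cX_{\Gamma_n}$ vary with $n$ (the constraint $u\equiv 1$ on $\Oign$, $u\equiv 0$ on $\Oegn$ shifts), so one cannot directly compare minimizers; the whole argument rests on constructing, for each $n$, a ``transfer map'' that takes a competitor for one problem to a competitor for the other with controlled energy change. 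Convergence in $\cN^2\Sigma_0,\cN^2\Sigma_1$ gives $C^1$-convergence of the hypersurfaces together with their normals, hence — via the tubular-neighbourhood/normal-coordinate machinery of \cite[Chapter 2]{MR3524106} already invoked in the excerpt — a sequence of $C^1$-diffeomorphisms $\Phi_n:\overline\Omega\to\overline\Omega$ with $\Phi_n\to\id$ in $C^1$, fixing $\partial\Omega$, and mapping $(\Oig,\Otg,\Oeg)$ onto $(\Oign,\Otgn,\Oegn)$.

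For the $\limsup$ bound: let $u$ be a near-minimizer of $\cE_{md,\Gamma}$ (or the minimizer, which exists by Theorem~\ref{Thm: wellposedness}); set $u_n:=u\circ\Phi_n^{-1}$. Then $u_n\in\cX_{\Gamma_n}$, and I must show $\cE_{md,\Gamma_n}(u_n)\to\cE_{md,\Gamma}(u)$. The nonpolar perimeter term $\int\theta\,d|Du_n|$ converges by the change-of-variables formula for $BV$ functions under $C^1$-diffeomorphisms together with $\Phi_n\to\id$ in $C^1$ and uniform continuity of $\theta$; the bulk terms $\int[P_h u_n+(1-u_n)\rho_s U^{\mathrm{vdW}}]$ converge since $U^{\mathrm{vdW}}$ is smooth away from the fixed atom centres (which stay inside $\Oig\subset\Oign$ for large $n$, where $u_n\equiv 1$, so no singularity is hit) and $u_n\to u$ in $L^1$ with the Jacobians $\to 1$. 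For the polar part, $\psi_{u_n}$ solves \eqref{SMPBE} with coefficient $\epsilon(u_n)$ and reaction coefficient $(1-u_n)$, both of which converge to $\epsilon(u)$, $(1-u)$ in $L^q$ for every finite $q$; by Lemma~\ref{Lem: GPBE converge} (stated for exactly this kind of perturbation) $\psi_{u_n}\to\psi_u$ in $H^1(\Omega)$, whence $\Ipd(u_n,\psi_{u_n})\to\Ipd(u,\psi_u)$ using the structure of $B_{md}$ and dominated convergence. This gives $\limsup_n\srE_{\Gamma_n}\le\cE_{md,\Gamma_n}(u_n)\to\cE_{md,\Gamma}(u)=\srE_\Gamma$ (up to the arbitrary $\varepsilon$ if a near-minimizer was used).

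For the $\liminf$ bound: pick a subsequence (not relabelled) along which $\srE_{\Gamma_n}\to\liminf$; let $u_n\in\cX_{\Gamma_n}$ be a minimizer of $\cE_{md,\Gamma_n}$. Pull back via $v_n:=u_n\circ\Phi_n$, so $v_n\in\cX_\Gamma$, and by the same change-of-variables estimates $\cE_{md,\Gamma}(v_n)\le\cE_{md,\Gamma_n}(u_n)+o(1)=\srE_{\Gamma_n}+o(1)$. A uniform energy bound (all $\srE_{\Gamma_n}$ are bounded, e.g. by evaluating at a fixed $u$ admissible for all large $n$) gives $\sup_n\int\theta\,d|Dv_n|<\infty$, hence $\{v_n\}$ is bounded in $BV(\Omega)$ with $0\le v_n\le 1$; extract a further subsequence $v_n\to v$ in $L^1(\Omega)$. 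One checks $v\in\cX_\Gamma$ (the constraints $v=1$ on $\Oig$, $v=0$ on $\Oeg$ pass to the $L^1$-limit since they hold for $v_n$ eventually). Then by the lower semicontinuity established inside the proof of Theorem~\ref{Thm: wellposedness} — which combines \cite[Corollary~1]{MR2345913} for the weighted total variation, Lemma~\ref{Lem: GPBE converge} for the polar part, and dominated convergence for the bulk terms — we get $\cE_{md,\Gamma}(v)\le\liminf_n\cE_{md,\Gamma}(v_n)\le\liminf_n\srE_{\Gamma_n}$, while $\srE_\Gamma\le\cE_{md,\Gamma}(v)$. Combining the two bounds yields $\srE_{\Gamma_n}\to\srE_\Gamma$.

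\textbf{Main obstacle.} The technical heart is the construction and control of the diffeomorphisms $\Phi_n$: one needs $\Phi_n\to\id$ in $C^1(\overline\Omega)$ (not merely uniformly) so that both the $BV$ change-of-variables bound on $\int\theta\,d|Du_n|$ and the $H^1$-stability of the PB solution go through, and one needs $\Phi_n$ to fix $\partial\Omega$ so that the Dirichlet datum $\psi_D$ is preserved and $u_n$ still satisfies the boundary part of \eqref{constrain 2}. Producing such $\Phi_n$ from $d_{\mathcal{MH}^2}$-convergence — patching the normal-coordinate diffeomorphisms on tubular neighbourhoods of $\Sigma_0,\Sigma_1$ to the identity elsewhere via a partition of unity, and verifying the $C^1$-smallness of the patched map — is exactly where the $C^2$ (rather than merely $C^1$ or Lipschitz) regularity of the hypersurfaces is used, and is the step I expect to require the most care.
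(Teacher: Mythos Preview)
Your approach is correct and genuinely different from the paper's. You propose the standard moving-domain technique: pull the varying constraints back to a fixed reference via global $C^1$-diffeomorphisms $\Phi_n\to\id$ built from tubular-neighbourhood coordinates, and control the energy under the change of variables.

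The paper instead avoids diffeomorphisms entirely by a \emph{sandwich with parallel surfaces}. Given $\Gamma_n\to\Gamma$ in $\mathcal{MH}^2$ with $r_n=d_{\mathcal{MH}^2}(\Gamma_n,\Gamma)$, it uses the level surfaces $\cG^{\pm r_n}=(\cG_0^{\pm r_n},\cG_1^{\pm r_n})$ of the signed distance to $\Gamma$ and observes the nesting $\cX_{\cG^{+r_n}}\subset\cX_{\Gamma_n}\subset\cX_{\cG^{-r_n}}$, which immediately yields $\srE(\cG^{-r_n})\le\srE(\Gamma_n)\le\srE(\cG^{+r_n})$. The theorem then reduces to two monotone cases, proved separately: (i) when the constraint regions shrink (Proposition~\ref{Prop: inner approx}), the fixed minimizer $\um$ is admissible for every $\Gamma_n$ and BV-compactness plus lower semicontinuity closes the argument; (ii) when they grow (Proposition~\ref{Prop: outer approx}), one \emph{truncates} $\um$ on the thin annuli $B_{r_n}(\Sigma_i)$ (Lemma~\ref{tech lem 2} shows this truncation preserves the weighted total variation in the limit). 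In both cases Lemma~\ref{Lem: GPBE converge} handles the polar part against a \emph{fixed} limit $u$.

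What each buys: the paper's route is more elementary---no global diffeomorphism needs to be patched together, and the polar-energy convergence always compares a varying sequence against a fixed function, so Lemma~\ref{Lem: GPBE converge} applies directly. Your route is more systematic and portable to other moving-constraint problems, but the $\liminf$ step hides a genuine subtlety you should make explicit: showing $\cE_{md}(u_n\circ\Phi_n)\le\cE_{md}(u_n)+o(1)$ with \emph{both} $u_n$ and $\Phi_n$ varying cannot be reduced to Lemma~\ref{Lem: GPBE converge} as stated (no fixed limit). For the polar term you must transform the test function $\psi$ along with $u$---using $\psi_{v_n}\circ\Phi_n^{-1}$ as a competitor for $\Ipd(u_n,\cdot)$ and exploiting the uniform $H^1\cap L^\infty$ bound from Lemma~\ref{Lem: GPBE estimate}---to get the two-sided estimate. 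This works, but is the real content of your ``same change-of-variables estimates'' and deserves to be spelled out.
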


We will split the proof of Theorem~\ref{Thm: continuous dep on constraints} into several steps.
Pick arbitrary $\Gamma=(\Sigma_0,\Sigma_1)\in \mathcal{MH}^2$.
To prove the continuity of $\srE$ at $\Gamma$, we choose an arbitrary sequence $\Gamma_n:=(\Sigma_{0,n},\Sigma_{1,n}) \in \mathcal{MH}^1$. 
In the sequel, we will first prove two propositions concerning the convergence $ \lim\limits_{n\to \infty} \srE(\Gamma_n) = \srE(\Gamma) $ under the condition
\begin{equation}
\label{sequence hypersurface}
\lim\limits_{n\to \infty} d_{\mathcal{MH}^1} (\Gamma_n,  \Gamma )=0  
\end{equation}
when $\Gamma_n$ converges to $\Gamma$ from the interior and exterior of $\Gamma$ (with respect to the orientations of $(\Sigma_0,\Sigma_1)$).
In the rest of this section, we will denote  by $\um$ ($\umn$, resp.) a global minimizer of $\cE_{md, \Gamma}$ ($\cE_{md, \Gamma_n}$ resp.) in $\cX_\Gamma $ ($\cX_{\Gamma_n} $ resp.).

\begin{prop}\label{Prop: inner approx}
Assume that $\Gamma=(\Sigma_0,\Sigma_1)\in \mathcal{MH}^2$ and $\Gamma_n=(\Sigma_{0,n},\Sigma_{1,n})\in \mathcal{MH}^1$ satisfy \eqref{sequence hypersurface}.
If
\begin{equation}
\label{sequence domain}
\Oign \subset \Oig \quad \text{and} \quad \Oegn \subset \Oeg ,
\end{equation}
then $ \lim\limits_{n\to \infty} \srE(\Gamma_n) = \srE(\Gamma) $. 
\end{prop}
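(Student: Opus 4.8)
The plan is to prove the two one-sided bounds $\limsup_{n\to\infty}\srE(\Gamma_n)\le\srE(\Gamma)$ and $\liminf_{n\to\infty}\srE(\Gamma_n)\ge\srE(\Gamma)$ separately, keeping in mind that $\cE_{md}$ is one and the same functional on $BV(\Omega;[0,1])$, only the admissible set $\cX_\Gamma$ depending on $\Gamma$. The upper bound is immediate: under hypothesis \eqref{sequence domain} every $u\in\cX_\Gamma$ satisfies $u=1$ a.e.\ on $\Oig\supset\Oign$ and $u=0$ a.e.\ on $\Oeg\supset\Oegn$, so $u\in\cX_{\Gamma_n}$; hence $\cX_\Gamma\subset\cX_{\Gamma_n}$ and $\srE(\Gamma_n)=\min_{\cX_{\Gamma_n}}\cE_{md}\le\min_{\cX_\Gamma}\cE_{md}=\srE(\Gamma)$ for every $n$.

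For the lower bound I would pick minimizers $\umn\in\cX_{\Gamma_n}$ of $\cE_{md,\Gamma_n}$, which exist by Theorem~\ref{Thm: wellposedness} and Remark~\ref{Rmk: Lip wellposedness}. Since $0\le\umn\le1$, $B_{md}\ge0$, the nonpolar bulk integrand is bounded below uniformly in $u$, and $\Ipd(u,\psi_u)\ge\Ipd(u,\psi_D)$ is bounded below by a constant independent of $u\in\cX$ (using the fixed admissible competitor $\psi_D$ and the uniform control from Lemma~\ref{Lem: GPBE estimate}), the bound $\cE_{md,\Gamma_n}(\umn)=\srE(\Gamma_n)\le\srE(\Gamma)$ together with $\theta\ge\theta_0>0$ forces $\sup_n\int_\Omega\theta\,d|D\umn|<\infty$, hence $\sup_n\|\umn\|_{BV(\Omega)}<\infty$. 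By the compact embedding $BV(\Omega)\hookrightarrow L^1(\Omega)$ ($\Omega$ being bounded Lipschitz), after passing to a subsequence (not relabelled) $\umn\to u_*$ in $L^1(\Omega)$ with $u_*\in BV(\Omega)$ and $0\le u_*\le1$ a.e.

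The crucial step is to verify $u_*\in\cX_\Gamma$, and for this I would establish the volume convergences $\cL^3(\Oig\setminus\Oign)\to0$ and $\cL^3(\Oeg\setminus\Oegn)\to0$. For the interior one, fix $x\in\Oig$; as $\Oig$ is open and connected, choose a path $\gamma\subset\Oig$ joining $x$ to the atom centre $x_1$, noting $\overline{B}(x_1,\sigma_1)\subset\Oign$ and $\mathrm{dist}(\gamma,\Sigma_1)>0$ since $\gamma$ is compact in the open set $\Oig$. Hypothesis \eqref{sequence hypersurface} gives $d_\cH(\Sigma_{1,n},\Sigma_1)\to0$, so for $n$ large $\gamma$ is disjoint from $\Sigma_{1,n}=\partial\Oign$ (it is automatically disjoint from $\Sigma_{0,n}\subset\overline{\Oeg}$); as $\Sigma_{1,n}$ separates $\Omega$ and $\gamma$ is connected and meets $\Oign$ at $x_1$, it follows that $\gamma\subset\Oign$, whence $x\in\Oign$. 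Thus every point of $\Oig$ eventually lies in $\Oign$, and together with $\Oign\subset\Oig$ this gives $\chi_{\Oign}\to\chi_{\Oig}$ a.e., so $\cL^3(\Oig\setminus\Oign)\to0$ by dominated convergence. The exterior convergence is obtained analogously, routing paths through a fixed collar of $\partial\Omega$ which lies in $\Oeg$ and, because $d_\cH(\Sigma_{0,n},\Sigma_0)\to0$ keeps $\Sigma_{0,n}$ uniformly away from $\partial\Omega$, also in $\Oegn$ for all large $n$. Finally, from $\umn\equiv1$ on $\Oign$ and $0\le\umn\le1$ one gets $\int_{\Oig}(1-u_*)\,dx=\lim_n\int_{\Oig\setminus\Oign}(1-\umn)\,dx\le\lim_n\cL^3(\Oig\setminus\Oign)=0$, so $u_*=1$ a.e.\ on $\Oig$, and likewise $\int_{\Oeg}u_*\,dx\le\lim_n\cL^3(\Oeg\setminus\Oegn)=0$, so $u_*=0$ a.e.\ on $\Oeg$; hence $u_*\in\cX_\Gamma$.

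To finish, I would invoke the lower semicontinuity of $\cE_{md}$ along $L^1(\Omega)$-convergence established in Theorem~\ref{Thm: wellposedness} (via lower semicontinuity of the weighted total variation, \cite[Corollary~1]{MR2345913}, and Lemma~\ref{Lem: GPBE converge}) to obtain $\srE(\Gamma)\le\cE_{md}(u_*)\le\liminf_n\cE_{md,\Gamma_n}(\umn)=\liminf_n\srE(\Gamma_n)$; since this reasoning applies along every subsequence, the full sequence converges and $\lim_n\srE(\Gamma_n)=\srE(\Gamma)$. I expect the main obstacle to be precisely the two volume convergences $\cL^3(\Oig\setminus\Oign)\to0$ and $\cL^3(\Oeg\setminus\Oegn)\to0$: one must turn $d_{\mathcal{MH}^1}$-convergence of the bounding hypersurfaces into convergence of the enclosed volumes, which relies both on the nesting \eqref{sequence domain} and on the topological fact that each $\Sigma_{j,n}$ separates $\Omega$, anchored by the fixed inclusions $\bigcup_i\overline{B}(x_i,\sigma_i)\subset\Oign$ and $\partial\Omega\subset\partial\Oegn$.
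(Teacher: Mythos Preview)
Your argument is correct and follows essentially the same route as the paper: the inclusion $\cX_\Gamma\subset\cX_{\Gamma_n}$ gives the upper bound, a uniform $BV$ bound on the minimizers $\umn$ yields $L^1$-compactness, the limit lies in $\cX_\Gamma$ by the volume convergences, and lower semicontinuity closes the argument. You supply a detailed path-connectedness justification of $\cL^3(\Oig\setminus\Oign)\to0$ and $\cL^3(\Oeg\setminus\Oegn)\to0$ that the paper simply asserts as a consequence of \eqref{sequence hypersurface}, so your write-up in fact fills in a step the paper leaves to the reader.
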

\begin{proof}
Observe that $\um \in \cX_{\Gamma_n}$ for all $n$. Thus
$$
\srE(\Gamma_n) = \cE_{md, \Gamma_n} (\umn) \leq \cE_{md, \Gamma_n}(\um)=\cE_{md, \Gamma}(\um) =\srE(\Gamma).
$$
It follows from Lemma~\ref{Lem: GPBE estimate} that there exists some constant $M>0$ such that
$$
 \left| \int_{\Omega }\rho_s  U^{\rm vdW}  (1-\umn) \, dx + \Ipd(\umn, \psi_{\umn}) \right| \leq M,
$$
where $\psi_{\umn}\in \cA$ is the solution of \eqref{SMPBE} with $u$ replaced by $\umn$.
This implies that
\begin{align*}
\theta_0\int_\Omega  d|D \umn| +    P_h \|\umn\|_{L^1}  -M  \leq \srE(\Gamma).
\end{align*}
Therefore, $\| \umn\|_{BV}$ is uniformly bounded.
\cite[Theorem~5.2.3.4]{MR1158660} implies that 
there exists a  subsequence, not relabelled, and some $u\in BV(\Omega)$  such that $\umn\to u$ in $L^1(\Omega)$. Note that \eqref{sequence hypersurface}  implies that
$$
\lim\limits_{n\to \infty}\left\|  \chi_{\Oign}-\chi_{\Oig } \right\|_{L^1} =\lim\limits_{n\to \infty}\left\|  \chi_{\Oegn}-\chi_{\Oeg } \right\|_{L^1}=0
$$
and thus $u\in \cX_\Gamma$.
From \cite[Corollary 1]{MR2345913}, Lemma~\ref{Lem: GPBE converge}  and the dominated convergence theorem, we infer that
\begin{align*}
\srE(\Gamma) \leq \cE_{md,\Gamma}(u)\leq \liminf\limits_{n\to \infty} \srE(\Gamma_n) \leq \limsup\limits_{n\to \infty} \srE(\Gamma_n) \leq \srE(\Gamma).
\end{align*}
This proves the  assertion.
\end{proof}

To prove the convergence $ \lim\limits_{n\to \infty} \srE(\Gamma_n) = \srE(\Gamma) $ in case
\begin{equation}
\label{sequence domain}
\Oign \supseteq \Oig \quad \text{and} \quad \Oegn \supseteq \Oeg  ,
\end{equation}
we will need some preparations. 
Because $\Sigma_i$ are $C^2$, following the proof of Lemma~\ref{Lem: modify sets}, the map
$$
\Lambda_i : \Sigma_i\times (-\a,\a) \to B_\a(\Sigma_i) : \,\, (\q, r) \mapsto \q+ r \nu_{\Sigma_i}(\q), \quad i=0,1,
$$
is a $C^1$-diffeomorphism for some $\a>0$, where $B_\a(\Sigma_i)$ is the tubular neighbourhood of $\Sigma_i$ with width $\a$ and  $\nu_{\Sigma_i}$ is 
the  outward unit normal of $\Sigma_i$. 
By the inverse function theorem, there exist  two  maps $P_i\in C^1(B_{\a}(\Sigma_i), \Sigma_i)$ and $d_i\in C^1(B_{\a}(\Sigma_i),(-\a,\a))$, where $P_i$ is the nearest point projection onto $\Sigma_i$ and $d_i$ is the signed distance to $\Sigma_i$ with $d_i(x)>0$ for $x\in B_{\a}(\Sigma_i) \cap \Ot$.
Note that $\cG_i^r:=\Lambda_i (\Sigma_i,r) \in \mathcal{MH}^1$. Its orientation is chosen to be compatible with that of $\Sigma_i$ so that its
outward unit normal 
$$
\nu_{i,r}(x)=\nu_{\Sigma_i}(P_i(x)) ,\quad  x\in \cG_i^r.
$$
See \cite[Section 2.2.2]{MR3524106}.
Then it is clear that $\lim\limits_{r\to 0} d_{\cH}( \cG_i^r,  \Sigma_i)=0.$
Without loss of generality, we may assume that  $\a>0$ is so small that for any $r\in (-\a,\a)$, $ \cG_0^r \cap \cG_1^r =\emptyset$.

\begin{lem}\label{tech lem 2}
For every $f\in \cX_{\Gamma}$, we define  
\begin{align*}
f_r(x)=
\begin{cases}
1 ,\quad & x\in \Oi \cup B_r(\Sigma_1)\\
0, & x\in \Oe \cup B_r(\Sigma_0)\\
f(x), & \text{elsewhere} 
\end{cases}
\end{align*}
for $r\in (0,\a)$.
Then $f_r\to f$ in $L^1(\Omega)$ and $\int_\Omega \theta d |D f_r(x)| \to \int_\Omega \theta d |D f (x)|$ as $r\to 0^+$.
\end{lem}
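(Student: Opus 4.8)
The plan is to prove the two assertions separately: the $L^1$-convergence follows from a direct volume estimate, while the convergence of the weighted total variations is obtained by writing $|Df_r|$ as an \emph{exact} sum of three mutually singular pieces -- one carried by a shrinking open set, and two carried by the parallel $C^1$-surfaces $\cG_0^r$ and $\cG_1^r$ -- and then letting $r\downarrow 0$ using the slicing/trace structure of $BV$-functions on the tubular collars of $\Sigma_0$ and $\Sigma_1$. For the $L^1$-convergence: for $r\in(0,\a)$ the functions $f_r$ and $f$ coincide off $B_r(\Sigma_0)\cup B_r(\Sigma_1)$, and since $0\le f_r,f\le 1$ we get $\|f_r-f\|_{L^1(\Omega)}\le \cL^3(B_r(\Sigma_0))+\cL^3(B_r(\Sigma_1))$. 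As $\Lambda_i\colon\Sigma_i\times(-\a,\a)\to B_\a(\Sigma_i)$ is a $C^1$-diffeomorphism with $B_r(\Sigma_i)=\Lambda_i(\Sigma_i\times(-r,r))$ for $r<\a$, the area formula yields $\cL^3(B_r(\Sigma_i))=\int_{-r}^{r}\!\int_{\Sigma_i}|J\Lambda_i(\q,s)|\,d\cH^2(\q)\,ds=O(r)$, whence $\|f_r-f\|_{L^1(\Omega)}\to 0$.

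For the total variation, fix $r\in(0,\a)$; after possibly shrinking $\a$, the surfaces $\cG_0^r,\cG_1^r$ are disjoint $C^1$-hypersurfaces contained in $\Ot$, and $\Oi\cup B_r(\Sigma_1)$ and $\Oe\cup B_r(\Sigma_0)$ are disjoint. On the open set $\mathrm{int}(\Oi\cup B_r(\Sigma_1))$ one has $f_r\equiv 1$, on $\mathrm{int}(\Oe\cup B_r(\Sigma_0))$ one has $f_r\equiv 0$, and on $\Omega_r:=\Omega\setminus\big(\overline{\Oi\cup B_r(\Sigma_1)}\cup\overline{\Oe\cup B_r(\Sigma_0)}\big)$ one has $f_r=f$; moreover $\partial(\Oi\cup B_r(\Sigma_1))\cap\Omega=\cG_1^r$ and $\partial(\Oe\cup B_r(\Sigma_0))\cap\Omega=\cG_0^r$. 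The standard gluing formula for $BV$-functions across $C^1$-interfaces then shows $f_r\in BV(\Omega)$ and
$$\int_\Omega\theta\,d|Df_r|=\int_{\Omega_r}\theta\,d|Df|+\int_{\cG_1^r}\theta\,(1-\gamma_r f)\,d\cH^2+\int_{\cG_0^r}\theta\,\gamma_r f\,d\cH^2,$$
where $\gamma_r f\in[0,1]$ denotes the trace of $f$ on $\cG_i^r$ from the $\Omega_r$-side, and there is no contribution on $\Sigma_0$ or $\Sigma_1$ because $f_r$ is locally constant in neighbourhoods of them.

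Now let $r\downarrow 0$. Since $\bigcap_{r>0}\overline{B_r(\Sigma_i)}=\Sigma_i\subset\overline{\Omega}_{\rm i}$ resp. $\overline{\Omega}_{\rm e}$, we have $\Omega_r\uparrow\Ot$, so by monotone convergence (the Radon measure $\theta\,|Df|$ is finite) $\int_{\Omega_r}\theta\,d|Df|\to\int_{\Ot}\theta\,d|Df|$. For the interface term over $\cG_1^r$, pulling back through $\Lambda_1(\cdot,r)$ gives $\int_{\Sigma_1}\theta(\Lambda_1(\q,r))\,(1-\widetilde f(\q,r))\,J_1(\q,r)\,d\cH^2(\q)$ with $\widetilde f:=f\circ\Lambda_1\in BV(\Sigma_1\times(0,\a))$ and $J_1(\cdot,r)$ the tangential Jacobian; since $\Sigma_1$ is compact $C^2$ and $\Lambda_1(\cdot,0)=\mathrm{id}_{\Sigma_1}$, both $\theta\circ\Lambda_1(\cdot,r)\to\theta|_{\Sigma_1}$ and $J_1(\cdot,r)\to 1$ uniformly, while the slices $\widetilde f(\cdot,r)$ converge in $L^1(\Sigma_1)$ to the boundary trace $\widetilde f(\cdot,0^+)$, which under $\Lambda_1$ is precisely the trace $f^+$ of $f$ on $\Sigma_1$ from the $\Ot$-side. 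Hence this term tends to $\int_{\Sigma_1}\theta\,(1-f^+)\,d\cH^2$, and likewise the $\cG_0^r$-term tends to $\int_{\Sigma_0}\theta\,f^+\,d\cH^2$. Finally, decomposing $\Omega=\Oi\sqcup\Sigma_1\sqcup\Ot\sqcup\Sigma_0\sqcup\Oe$ and using that $f\equiv 1$ a.e.\ on $\Oi$, $f\equiv 0$ a.e.\ on $\Oe$, and that $|Df|$ restricted to the $2$-rectifiable sets $\Sigma_0,\Sigma_1$ is purely of jump type with inner traces $1$ on $\Sigma_1$ and $0$ on $\Sigma_0$, we obtain $\int_\Omega\theta\,d|Df|=\int_{\Ot}\theta\,d|Df|+\int_{\Sigma_1}\theta\,(1-f^+)\,d\cH^2+\int_{\Sigma_0}\theta\,f^+\,d\cH^2$; comparing with the limit of the displayed decomposition gives $\int_\Omega\theta\,d|Df_r|\to\int_\Omega\theta\,d|Df|$.

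I expect the main obstacle to be the $L^1(\Sigma_i)$-convergence of the parallel slices $\widetilde f(\cdot,r)$ to the $BV$-trace as $r\to 0^+$: the key point is that $r\mapsto\widetilde f(\cdot,r)\in L^1(\Sigma_i)$ is a curve of bounded variation with total variation controlled by $|D_s\widetilde f|$, so that a one-sided limit at $r=0$ exists and coincides with the trace. This is exactly the slicing/trace theory for $BV$-functions (see e.g.\ \cite[Chapter~5]{MR1158660}); the remainder is bookkeeping with the tubular-collar diffeomorphisms $\Lambda_i$ and the coarea decomposition of $\theta\,|Df|$. As a safeguard, the inequality $\liminf_{r\to0^+}\int_\Omega\theta\,d|Df_r|\ge\int_\Omega\theta\,d|Df|$ follows at once from \cite[Corollary~1]{MR2345913} together with the $L^1$-convergence, so only the matching upper bound genuinely requires the trace argument.
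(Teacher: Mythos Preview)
Your proof is correct and rests on the same underlying $BV$ trace/collar machinery as the paper's; the difference is organizational. The paper first secures $\liminf_{r}\int_\Omega\theta\,d|Df_r|\ge\int_\Omega\theta\,d|Df|$ from lower semicontinuity (exactly your safeguard) and then proves the matching $\limsup$ by duality: it pairs the vector measure $[Df_r]$ against test fields $\theta\phi$, $\phi\in C^1_c(\Omega;\R^3)$, $\|\phi\|_\infty\le1$, decomposes the pairing over $\cG_i^r$, $\Sigma_i$, and $U_i^r:=B_r(\Sigma_i)\cap\Ot$, and uses a divergence--theorem trace identity on $U_i^r$ to show the boundary contributions cancel in the limit. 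You instead invoke the $BV$ gluing formula once to write $\theta\,|Df_r|$ \emph{exactly} as three mutually singular pieces, pass to the limit in each, and match against the analogous decomposition of $\theta\,|Df|$. Your route is a bit cleaner --- it gives the limit directly rather than via two inequalities --- at the price of making the slice--to--trace convergence $\widetilde f(\cdot,r)\to f^+$ in $L^1(\Sigma_i)$ explicit; the paper hides that same fact inside its identity $\lim_{r\to0^+}\int_{\cG_i^r}(\theta\phi)\cdot\nu_{i,r}T_rf\,d\cH^2=\int_{\Sigma_i}(\theta\phi)\cdot\nu_{\Sigma_i}Tf\,d\cH^2$, which is precisely that convergence tested against a fixed $\phi$. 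Both arguments are the same collar computation viewed from the two standard sides of the $BV$ duality.
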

\begin{proof}
The proof for $f_r\to f$ in $L^1(\Omega)$ is straightforward. So we will only show the second part. 
\cite[Corollary~1]{MR2345913} implies that
$ 
\int_\Omega \theta d|Df(x)| \leq \liminf\limits_{r\to 0^+} \int_\Omega \theta d|Df_r(x)| .
$ 
In the rest of the proof, it is   assumed that $i\in \{0,1\}$.
Put $U_i^r:=B_r(\Sigma_i) \cap \Ot$.
For any $\phi\in C^1_c(\Omega;\R^3)$ with $\|\phi\|_\infty \leq 1$ and $u\in \cX_{\Gamma}$,   the trace theorem of $BV$-functions, cf. \cite[Theorem~5.3.1]{MR1158660} implies  that
\begin{equation*} 
\int_{U_i^r} u \nabla \cdot (\theta \phi) \, dx + \int_{U_i^r}  (\theta \phi) \cdot d[Du]   =\int_{\cG_i^r } (\theta \phi)\cdot \nu_{i,r} T_r u \, d\cH^2- \int_{\Sigma_i } (\theta \phi)\cdot \nu_{\Sigma_i} T_r u\, d\cH^2. 
\end{equation*}
Here $T_r u$ is the trace of $u|_{U_i^r}$  on $\partial U_i^r$; and $[Du] $ is the vector-valued measure for the gradient of $u$. 
Pushing $r\to 0^+$ above yields that
\begin{equation}\label{trace eq 2}
\lim\limits_{r\to 0^+}  \int_{\cG_i^r } (\theta \phi)\cdot \nu_{i,r} T_r u \, d\cH^2 = \int_{\Sigma_i } (\theta \phi)\cdot \nu_{\Sigma_i} T_r u\, d\cH^2.
\end{equation}
Direct computations show	
\begin{align*}
\int_{\Omega} (\theta \phi) \cdot d[D f_r] & =    \sum_{i=0,1} \int_{\cG_i^r} (\theta \phi) \cdot d[D f_r]  +    \int_{ \Ot\setminus (U_0^r \cup U_1^r) } (\theta \phi) \cdot d[D f]  \\
&= \int_{\Omega} (\theta \phi) \cdot d[D f ]  - \sum_{i=0,1} \int_{U_i^r} (\theta \phi) \cdot d[D f] - \sum_{i=0,1} \int_{\Sigma_i} (\theta \phi) \cdot d[D f] \\
& \quad - \sum_{i=0,1} \int_{\cG_i^r} (\theta \phi) \cdot d[D (f- f_r)]\\
&\leq \int_{\Omega} \theta   d| D f (x)|  - \sum_{i=0,1} \int_{U_i^r} (\theta \phi) \cdot d[D f] - \sum_{i=0,1} \int_{\Sigma_i} (\theta \phi) \cdot d[D f] \\
& \quad - \sum_{i=0,1} \int_{\cG_i^r} (\theta \phi) \cdot d[D (f- f_r)].
\end{align*}
It follows from \eqref{trace eq 2} that
\begin{align*}
&  -\sum_{i=0,1} \int_{\Sigma_i} (\theta \phi) \cdot d[D f]  - \sum_{i=0,1} \int_{\cG_i^r} (\theta \phi) \cdot d[D (f- f_r)] \\
=&  \sum\limits_{i=0,1} \int_{\Sigma_i} (\theta \phi) \cdot \nu_{\Sigma_i} (i- T_r f) \, d\cH^2 - \sum\limits_{i=0,1} \int_{\cG_i^r} (\theta \phi)\cdot \nu_{i,r} (i- T_r f) \, d\cH^2  
\to   0  
\end{align*}
as  $r\to 0^+$. 
See the proof of \cite[Theorem~5.4.1]{MR1158660}.
We clearly have
$ 
\lim\limits_{r\to 0^+}  \int_{U_i^r} (\theta \phi) \cdot d[D f]   =0.
$ 
Therefore, we can infer from \eqref{Def: weighted total variation} and \cite[Theorem~5.3.1]{MR1158660} that
$$ 
\limsup\limits_{r\to 0^+} \int_\Omega \theta d|Df_r(x)| \leq \int_\Omega \theta d|Df(x)| .
$$
This proves the assertion
\end{proof}

\begin{prop}\label{Prop: outer approx}
Assume that $\Gamma=(\Sigma_0,\Sigma_1)\in \mathcal{MH}^2$ and $\Gamma_n=(\Sigma_{0,n},\Sigma_{1,n})\in \mathcal{MH}^1$ satisfy \eqref{sequence hypersurface}.
If
\begin{equation}
\label{sequence domain-2}
\Oign \supseteq \Oig \quad \text{and} \quad \Oegn \supseteq \Oeg  ,
\end{equation}
then $ \lim\limits_{n\to \infty} \srE(\Gamma_n) = \srE(\Gamma) $.  
\end{prop}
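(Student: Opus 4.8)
The plan is to establish the two one-sided bounds $\liminf_{n\to\infty}\srE(\Gamma_n)\ge\srE(\Gamma)$ and $\limsup_{n\to\infty}\srE(\Gamma_n)\le\srE(\Gamma)$ separately; in contrast to Proposition~\ref{Prop: inner approx}, no subsequence extraction is needed. For the first bound I would note that under \eqref{sequence domain-2} Constraint~\eqref{constrain 2} is \emph{more} restrictive for $\Gamma_n$ than for $\Gamma$, so $\cX_{\Gamma_n}\subseteq\cX_\Gamma$; since neither the integrand in \eqref{ensemble average solvation energy-steric} nor the equation \eqref{SMPBE} depends on $\Gamma$ (only on the argument $u$), the functionals agree on $\cX_{\Gamma_n}$, whence $\srE(\Gamma_n)=\min_{\cX_{\Gamma_n}}\cE_{md,\Gamma}\ge\min_{\cX_\Gamma}\cE_{md,\Gamma}=\srE(\Gamma)$ for every $n$. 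Here each $\srE(\Gamma_n)$ is attained by Theorem~\ref{Thm: wellposedness} and Remark~\ref{Rmk: Lip wellposedness}, since a $C^1$ hypersurface is Lipschitz.

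The reverse bound is obtained by feeding suitably ``pushed-out'' competitors into $\cX_{\Gamma_n}$. The geometric heart of the argument is the claim that \emph{for every sufficiently small $\a>0$ there is $N(\a)\in\N$ such that for all $n\ge N(\a)$}
\[
\Oign\subseteq\Oig\cup B_\a(\Sigma_1),\qquad \Oegn\subseteq\Oeg\cup B_\a(\Sigma_0).
\]
This would follow from \eqref{sequence hypersurface} --- which yields $d_\cH(\Sigma_{i,n},\Sigma_i)\to0$, hence $\Sigma_{i,n}\subseteq B_\a(\Sigma_i)$ for $n$ large --- together with \eqref{sequence domain-2}: for small $\a$ the $C^2$ hypersurface $\Sigma_1$ has a tubular neighbourhood $B_\a(\Sigma_1)$, and $\Omega\setminus B_\a(\Sigma_1)$ decomposes into a connected ``inner'' piece $\Oig\setminus B_\a(\Sigma_1)$ (which contains the atom balls $\overline B(x_i,\sigma_i)$ and lies in $\Oign$ because $\Oig\subseteq\Oign$) and a connected ``far'' piece $(\Omega\setminus\overline{\Oig})\setminus B_\a(\Sigma_1)$ (which meets $\Oeg$, hence is disjoint from $\Oign$ since $\Oign\subseteq\Omega\setminus\overline{\Oegn}\subseteq\Omega\setminus\Oeg$ by \eqref{sequence domain-2}); as $\partial\Oign=\Sigma_{1,n}$ is disjoint from both pieces for $n$ large, each piece lies entirely inside $\Oign$ or entirely outside, forcing $\Oign\subseteq\Oig\cup B_\a(\Sigma_1)$. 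The containment for $\Sigma_0$ is proved the same way, using $\bigcup\overline B(x_i,\sigma_i)\subseteq\Oig$ and $\Oig\subseteq\Omega\setminus\overline{\Oegn}$.

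Granting the claim, I would fix a minimizer $\um$ of $\cE_{md,\Gamma}$ in $\cX_\Gamma$ (Theorem~\ref{Thm: wellposedness}) and, for $\a$ small, set $v_\a:=(\um)_\a$, the modification from Lemma~\ref{tech lem 2}. By construction $v_\a\equiv1$ on $\Oig\cup B_\a(\Sigma_1)\supseteq\Oign$, $v_\a\equiv0$ on $\Oeg\cup B_\a(\Sigma_0)\supseteq\Oegn$, and $0\le v_\a\le1$, so $v_\a\in\cX_{\Gamma_n}$ for $n\ge N(\a)$; hence $\srE(\Gamma_n)\le\cE_{md,\Gamma_n}(v_\a)=\cE_{md,\Gamma}(v_\a)$ for such $n$, i.e.\ $\limsup_{n\to\infty}\srE(\Gamma_n)\le\cE_{md,\Gamma}(v_\a)$ for every small $\a$. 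Finally I would let $\a\to0^+$: Lemma~\ref{tech lem 2} gives $v_\a\to\um$ in $L^1(\Omega)$ and $\int_\Omega\theta\,d|Dv_\a|\to\int_\Omega\theta\,d|D\um|$; the bulk non-polar terms converge by dominated convergence (on $\Oig$ one has $1-v_\a=1-\um=0$, while $U^{\mathrm{vdW}}$ is bounded on $\Omega\setminus\Oig$), and the polar term satisfies $\Ipd(v_\a,\psi_{v_\a})\to\Ipd(\um,\psi_{\um})$ by Lemma~\ref{Lem: GPBE converge} together with the uniform bounds of Lemma~\ref{Lem: GPBE estimate}. Thus $\cE_{md,\Gamma}(v_\a)\to\cE_{md,\Gamma}(\um)=\srE(\Gamma)$, so $\limsup_{n\to\infty}\srE(\Gamma_n)\le\srE(\Gamma)$, and with the first bound this gives $\lim_{n\to\infty}\srE(\Gamma_n)=\srE(\Gamma)$.

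I expect the geometric claim to be the only genuine obstacle: once it is in hand, Lemma~\ref{tech lem 2} and Lemma~\ref{Lem: GPBE converge} do all the analytic work. The delicate point is extracting the containments $\Oign\subseteq\Oig\cup B_\a(\Sigma_1)$ from mere Hausdorff convergence of the hypersurfaces and verifying the requisite connectedness of the collar-complements; this is exactly where the $C^2$-regularity of the \emph{limiting} pair $\Gamma$ (and hence the restriction of the statement to $\mathcal{MH}^2$ rather than $\mathcal{MH}^1$) enters, since it provides the tubular neighbourhood and the clean decomposition of $\Omega$ into inner collar-complement, collar, and far collar-complement.
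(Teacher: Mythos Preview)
Your proposal is correct and follows essentially the same route as the paper: the lower bound comes from the inclusion $\cX_{\Gamma_n}\subseteq\cX_\Gamma$ (the paper phrases this as $\umn\in\cX_\Gamma$), and the upper bound comes from pushing $\um$ out by the construction of Lemma~\ref{tech lem 2} to obtain a competitor in $\cX_{\Gamma_n}$, then invoking Lemma~\ref{tech lem 2} and Lemma~\ref{Lem: GPBE converge}. The paper streamlines your two-parameter limit by taking the push-out radius to be $r_n=d_{\mathcal{MH}^1}(\Gamma_n,\Gamma)$ itself, so that a single sequence $u_n=(\um)_{r_n}$ does the job; it then simply asserts $u_n\in\cX_{\Gamma_n}$ without spelling out the containment $\Oign\subseteq\Oig\cup B_{r_n}(\Sigma_1)$ that you carefully justify via the tubular-neighbourhood/connectedness argument---so your version is in fact more explicit at precisely the point you flagged as delicate.
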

\begin{proof}
Let $r_n= d_{\mathcal{MH}^1} (\Gamma_n,\Gamma)$. 
For sufficiently large $n\in \N$, we define
\begin{align*}
u_n (x)=
\begin{cases}
1 ,\quad & x\in \Oi \cup B_{r_n}(\Sigma_1),\\
0, & x\in \Oe \cup B_{r_n}(\Sigma_0) ,\\
\um (x) , & \text{elsewhere}.
\end{cases}
\end{align*}
Since $u_n \in \cX_{\Gamma_n}$ and $\umn \in \cX_\Gamma$, we have
\begin{equation}
\label{domain dep converg 1}
\srE(\Gamma) \leq \cE_{md,\Gamma} (\umn) =\cE_{md,\Gamma_n} (\umn) =\srE(\Gamma_n) \leq \cE_{md,\Gamma_n} (u_n)  = \cE_{md,\Gamma} (u_n) .
\end{equation}
Lemma~\ref{tech lem 2}  implies that as $n\to \infty$ 
$$
u_n \to \um  \quad \text{in }L^1(\Omega) \quad \text{and}\quad \int_\Omega \theta d|Du_n | \to \int_\Omega \theta d|D\um |.
$$
Using the dominated convergence theorem and Lemma~\ref{Lem: GPBE converge} yields
$ 
\lim\limits_{n\to \infty} \cE_{md,\Gamma}(u_n )=\cE_{md,\Gamma}(\um )=\srE(\Gamma).
$ 
Then the asserted statement follows by pushing $n\to \infty$ in \eqref{domain dep converg 1}.
\end{proof}

\begin{proof}[Proof of Theorem~\ref{Thm: continuous dep on constraints}]
Now we assume that  $\Gamma_n\in \mathcal{MH}^2$ and
$$
\Gamma_n:=(\Sigma_{0,n},\Sigma_{1,n}) \to \Gamma \quad \text{in } \mathcal{MH}^2 .
$$
Let $r_n= d_{\mathcal{MH}^2} (\Gamma_n,\Gamma)$.
Then  $\cG^{\pm r_n} = (\cG_0^{\pm r_n}, \cG_1^{\pm r_n})\in \mathcal{MH}^1$.
We denote by $u_n^{\pm}$ a global minimizer of $\cE_{md, \cG^{\pm r_n}}$ in $\cX_{\cG^{\pm r_n}}$.
Because $\umn\in \cX_{\cG^{- r_n}}$ and $u_n^+ \in \cX_{\Gamma_n}$, we infer that
$$
\srE(\cG^{- r_n} )\leq \cE_{md, \cG^{- r_n}} (\umn)= \cE_{md,\Gamma_n } (\umn)=\srE(\Gamma_n) \leq \cE_{md,\Gamma_n } (u_n^+) = \cE_{md, \cG^{+ r_n}} (u_n^+) = \srE(\cG^{+ r_n} ).
$$
Propositions~\ref{Prop: inner approx} and \ref{Prop: outer approx} show that
$$
\lim\limits_{n\to \infty} \srE(\cG^{- r_n} ) = \lim\limits_{n\to \infty} \srE(\cG^{+ r_n} ) = \srE(\Gamma).
$$
Therefore, $\lim\limits_{n\to \infty} \srE(\Gamma_n)= \srE(\Gamma).$
\end{proof}



\section{Conclusion}\label{Section: Conclusion}


Variational implicit solvation models (VISM) have been successful in the computation of solvation energies. However, traditional sharp-interface VISMs do not account for the inherent randomness of the solute-solvent interface, stemming from thermodynamic fluctuations. It has been demonstrated that neglecting these fluctuations can lead to substantial errors in predicting solvation free energies.

In this work, a new approach to calculating ensemble averaged solvation energy (EASE) is developed using diffuse-interface VISM. Grounded in principles of statistical mechanics and geometric measure theory, the method effectively captures EASE during the solvation process by employing a novel diffuse-interface profile $u(x)$, which represents the probability of finding a point $x$ in the solute phase across all microstates within the grand canonical ensemble.
To showcase the versatility of our modeling paradigm, we formulate two EASE functionals: one within the classic Poisson-Boltzmann (PB) framework and another within the framework of size-modified PB theory, accounting for finite-size effects of mobile ions and solvent molecules.
 
In the routine calculation of the EASE, one needs to carry out molecular dynamics (MD) simulations to obtain thousands of solute-solvent configures (snapshots) and perform energy calculations for each snapshot. By meticulously modeling the impact of conformational changes in the solvent medium, the proposed model can reproduce EASE with just one diffuse-interface configuration, potentially drastically speeding up calculations compared to ensemble-averaging energies from thousands of snapshots.

Preliminary analyses of the proposed EASE models indicate that the size-modified EASE functional outperforms the EASE functional in the classic PB theory in various analytical aspects. Motivated by these observations, we plan to conduct numerical implementations and further theoretical analyses of the size-modified EASE functional in future work.


\appendix

\section{ A class of strictly convex functionals}\label{Appendix A}

In this section, we collect some results concerning a class of strictly convex functionals associated with  the polar solvation energy .
These results can be proved by following the proofs of \cite[Propositions~3.1 and 3.2]{MR4523426} line by line.
We will thus omit their proofs.

Throughout this appendix, we assume that  $F\in C^\infty(\R)$ is a strictly convex function with $F'(0)=0$ and $F(0)=0$.

\begin{lem}\label{Lem: GPBE estimate}
Assume that $a,b,c\in L^\infty(\Omega)$ satisfy
\begin{equation}
\label{bdd of coeff}
0<L_0 \leq a \leq L_1  ,\quad 0\leq b \leq L_2 , \quad \|c\|_\infty \leq L_3
\end{equation}
for some constants $L_i$.
Then the functional 
$$
G(\psi)=\int_\Omega \left[ \frac{a}{2}|\nabla \psi|^2 +b F(\psi) -c \psi  \right]\, dx
$$
has a unique minimizer $\psi\in \cA$ for every $\psi_D\in W^{1,\infty}(\Omega)$, c.f. \eqref{Def cA}, or equivalently, $\psi$ weakly solves
\begin{equation*}
\left\{\begin{aligned}
\nabla\cdot(a  \nabla \psi)  -b F'(\psi) +c &=  0  &&\text{in}&&\Omega;\\
\psi&=\psi_D   &&\text{on}&&\partial\Omega.
\end{aligned}\right.
\end{equation*} 
Moreover,  for some constant $C_\infty$   depending only on  $L_i$  and $\psi_D$
$$
\|\psi \|_{H^1} + \|\psi \|_{L^\infty} \leq C_\infty .
$$
\end{lem}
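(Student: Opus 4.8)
The statement asks for existence, uniqueness, the Euler--Lagrange characterization, and an a priori bound for the minimizer of the strictly convex functional $G$ over the affine space $\cA$. My plan is the standard variational approach. First I would observe that $G$ is coercive on $\cA$: writing $\psi = \psi_D + w$ with $w \in H^1_0(\Omega)$, the bound $a \geq L_0 > 0$ gives $\int_\Omega \frac{a}{2}|\nabla\psi|^2 \geq \frac{L_0}{2}\|\nabla\psi\|_{L^2}^2$, the term $bF(\psi) \geq 0$ since $F \geq F(0) = 0$ (as $F$ is strictly convex with $F'(0)=0$), and the linear term $-\int_\Omega c\psi$ is controlled by $\|c\|_\infty \|\psi\|_{L^2} \leq L_3 |\Omega|^{1/2}\|\psi\|_{L^2}$; combining this with the Poincar\'e inequality on $H^1_0(\Omega)$ applied to $w$ yields $G(\psi) \to \infty$ as $\|\psi\|_{H^1} \to \infty$ within $\cA$. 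Next, $G$ is convex (the quadratic gradient term is convex, $bF(\psi)$ is convex since $b \geq 0$ and $F$ is convex, the linear term is affine) and strongly sequentially lower semicontinuous with respect to weak $H^1$ convergence (the gradient term by weak lower semicontinuity of the norm, the term $\int bF(\psi)$ by Fatou together with compact embedding $H^1 \hookhookrightarrow L^2$ passing to an a.e.-convergent subsequence, and the linear term is weakly continuous). The direct method then produces a minimizer $\psi \in \cA$, and \emph{strict} convexity of $F$ forces uniqueness: if $\psi_1 \neq \psi_2$ were two minimizers, $\frac{1}{2}(\psi_1+\psi_2)$ would have strictly smaller energy unless $\psi_1 = \psi_2$ a.e.

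The equivalence with the weak formulation of the boundary value problem is routine: computing the first variation $\frac{d}{dt}G(\psi + t\varphi)|_{t=0} = \int_\Omega [a\nabla\psi\cdot\nabla\varphi + bF'(\psi)\varphi - c\varphi]\,dx = 0$ for all $\varphi \in H^1_0(\Omega)$ is exactly the weak form of $\nabla\cdot(a\nabla\psi) - bF'(\psi) + c = 0$ with $\psi = \psi_D$ on $\partial\Omega$; differentiability under the integral sign is justified because $F' $ is continuous and, a posteriori, $\psi$ is bounded (or by a truncation argument). Conversely any weak solution is a critical point of the convex functional $G$, hence the minimizer. The $H^1$ bound is then immediate from coercivity: comparing $G(\psi) \leq G(\psi_D)$ and the coercivity estimate above gives $\|\psi\|_{H^1} \leq C_\infty(L_i, \psi_D)$.

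The $L^\infty$ bound is the one genuinely nontrivial step, and it is where I expect the main work. The clean approach is Stampacchia truncation: test the weak equation with $\varphi = (\psi - M)^+ \in H^1_0(\Omega)$ for a constant $M \geq \|\psi_D\|_{L^\infty}$. On the set $\{\psi > M\}$ one has $\psi > 0$, hence $F'(\psi) \geq 0$ since $F'(0) = 0$ and $F'$ is nondecreasing (convexity), so the term $\int b F'(\psi)(\psi-M)^+ \geq 0$ can be dropped; this leaves $L_0 \int_\Omega |\nabla(\psi-M)^+|^2 \leq \int_\Omega a|\nabla(\psi-M)^+|^2 \leq \int_\Omega c\,(\psi-M)^+ \leq L_3 \int_{A_M}(\psi-M)^+$, where $A_M = \{\psi > M\}$. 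Combining the Sobolev inequality in $\R^3$ with H\"older on $A_M$ gives the Stampacchia iteration inequality $\int_{A_M}(\psi - M)^+ \lesssim |A_M|^{1+\delta}$ for some $\delta > 0$, from which the standard lemma (e.g. Stampacchia's) yields $\psi \leq M_0$ with $M_0$ depending only on $L_0, L_3, |\Omega|$ and $\|\psi_D\|_{L^\infty}$; the symmetric argument with $(\psi + M)^-$ using $F'(\psi) \leq 0$ on $\{\psi < -M\}$ bounds $\psi$ from below. Since the paper remarks these results follow from \cite[Propositions~3.1 and 3.2]{MR4523426} line by line, I would ultimately cite that reference for the details rather than reproduce the full iteration, but the truncation structure above is the skeleton. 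The one subtlety to flag is ensuring $F' $ does not introduce a sign obstruction — this is handled precisely by the hypotheses $F'(0) = 0$ and $F$ strictly convex, which guarantee $sF'(s) \geq 0$ for all $s$, exactly what makes the bad term disappear in both truncations.
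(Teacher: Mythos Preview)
Your proposal is correct and matches the paper's approach: the paper itself omits the proof entirely, stating only that it follows \cite[Propositions~3.1 and 3.2]{MR4523426} line by line, and the standard direct-method-plus-Stampacchia argument you outline is precisely what that citation supplies. One small imprecision worth fixing: you attribute uniqueness to strict convexity of $F$, but since $b\geq 0$ may vanish on a set of positive measure, it is the quadratic gradient term (with $a\geq L_0>0$, combined with Poincar\'e on $H^1_0$) that guarantees strict convexity of $G$ on $\cA$ and hence uniqueness --- the strict convexity of $F$ is not needed for that step.
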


\begin{lem}\label{Lem: GPBE converge}
Let $a_n,b_n,c \in L^\infty(\Omega)$ satisfy \eqref{bdd of coeff}, $n=0,1,\cdots$.
Assume that $\psi_n $ is the unique minimizer of
$$
G_n(\psi)=\int_\Omega \left[ \frac{a_n}{2}|\nabla \psi|^2 +b_n F(\psi) -c \psi  \right]\, dx
$$
in $\cA$ for some $\psi_D\in W^{1,\infty}(\Omega)$.
If $a_n\to a_0$ and $b_n\to b_0$ in 
$L^1(\Omega)$  as  $n\to \infty$,
then
$$
\psi_n \to \psi_0 \quad \text{in } H^1(\Omega),\quad \text{and}\quad \lim\limits_{n\to \infty} G_n(\psi_n)= G_0(\psi_0).
$$
\end{lem}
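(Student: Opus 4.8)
The plan is the standard compactness--identification scheme for minimizers of uniformly convex functionals, the only genuinely delicate point being the upgrade from weak to strong $H^1$-convergence. First I would record the uniform bounds: since $(a_n,b_n,c)$ satisfy \eqref{bdd of coeff} with constants independent of $n$, Lemma~\ref{Lem: GPBE estimate} yields a constant $C_\infty=C_\infty(L_0,\dots,L_3,\psi_D)$ with $\|\psi_n\|_{H^1}+\|\psi_n\|_{L^\infty}\le C_\infty$ for all $n$; consequently $|F(\psi_n)|$ and $|F'(\psi_n)|$ are bounded a.e.\ by constants $M_F,M_{F'}$ depending only on $C_\infty$. Given an arbitrary subsequence, reflexivity of $H^1(\Omega)$ and the compact embedding $H^1(\Omega)\hookrightarrow L^2(\Omega)$ let me extract a further subsequence (not relabelled) with $\psi_n\rightharpoonup\psi_*$ in $H^1(\Omega)$, $\psi_n\to\psi_*$ in $L^2(\Omega)$ and a.e.\ in $\Omega$; since $\psi_n-\psi_D\in H^1_0(\Omega)$ and $H^1_0(\Omega)$ is weakly closed, $\psi_*\in\cA$, and $\|\psi_*\|_{L^\infty}\le C_\infty$.

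The heart of the argument is to show $\psi_n\to\psi_*$ strongly in $H^1(\Omega)$ along this subsequence. The minimizer $\psi_n$ satisfies the weak Euler--Lagrange equation $\int_\Omega a_n\nabla\psi_n\cdot\nabla\eta+\int_\Omega b_nF'(\psi_n)\eta=\int_\Omega c\,\eta$ for all $\eta\in H^1_0(\Omega)$. Testing with $\eta=\psi_n-\psi_*\in H^1_0(\Omega)$: the term $\int_\Omega c(\psi_n-\psi_*)\to 0$ by strong $L^2$-convergence, and $\bigl|\int_\Omega b_nF'(\psi_n)(\psi_n-\psi_*)\bigr|\le L_2M_{F'}\|\psi_n-\psi_*\|_{L^1}\to 0$, whence $\int_\Omega a_n\nabla\psi_n\cdot\nabla(\psi_n-\psi_*)\to 0$. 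Since $a_n\to a_0$ in $L^1$ and $0\le a_n\le L_1$, dominated convergence gives $a_n\nabla\psi_*\to a_0\nabla\psi_*$ in $L^2(\Omega)$, and together with $\nabla(\psi_n-\psi_*)\rightharpoonup 0$ in $L^2(\Omega)$ this yields $\int_\Omega a_n\nabla\psi_*\cdot\nabla(\psi_n-\psi_*)\to 0$. Subtracting,
\[
\int_\Omega a_n|\nabla(\psi_n-\psi_*)|^2
=\int_\Omega a_n\nabla\psi_n\cdot\nabla(\psi_n-\psi_*)-\int_\Omega a_n\nabla\psi_*\cdot\nabla(\psi_n-\psi_*)\longrightarrow 0,
\]
and since the left-hand side dominates $L_0\|\nabla(\psi_n-\psi_*)\|_{L^2}^2$, we get $\nabla\psi_n\to\nabla\psi_*$ in $L^2(\Omega)$, hence $\psi_n\to\psi_*$ in $H^1(\Omega)$.

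With strong convergence in hand, identifying the limit is routine: passing to the limit in the weak equation (using a.e.\ convergence of $\psi_n$, the bounds on $F',F''$, and $a_n\to a_0$, $b_n\to b_0$ in $L^1$ with the uniform $L^\infty$ bounds to control each term) shows $\psi_*$ weakly solves the equation with coefficients $(a_0,b_0,c)$, so $\psi_*=\psi_0$ by the uniqueness part of Lemma~\ref{Lem: GPBE estimate}. As every subsequence of $(\psi_n)$ admits a further subsequence converging in $H^1(\Omega)$ to the same limit $\psi_0$, the full sequence converges: $\psi_n\to\psi_0$ in $H^1(\Omega)$. For the energies, I would write $\int_\Omega\frac{a_n}{2}|\nabla\psi_n|^2-\int_\Omega\frac{a_0}{2}|\nabla\psi_0|^2=\int_\Omega\frac{a_n}{2}\bigl(|\nabla\psi_n|^2-|\nabla\psi_0|^2\bigr)+\int_\Omega\frac{1}{2}(a_n-a_0)|\nabla\psi_0|^2$, where the first integral tends to $0$ because $\nabla\psi_n\to\nabla\psi_0$ in $L^2$ and $0\le a_n\le L_1$, and the second by dominated convergence; similarly $\int_\Omega b_nF(\psi_n)\to\int_\Omega b_0F(\psi_0)$ since $F(\psi_n)\to F(\psi_0)$ in $L^1$ (a.e.\ convergence plus the bound $M_F$ and $0\le b_n\le L_2$), and $\int_\Omega c\psi_n\to\int_\Omega c\psi_0$ trivially; adding these gives $G_n(\psi_n)\to G_0(\psi_0)$. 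This is essentially the argument of \cite[Propositions~3.1 and~3.2]{MR4523426}.

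\textbf{Main obstacle.} The one step requiring care is the strong-convergence claim: one cannot simply invoke weak lower semicontinuity of the Dirichlet term, because $a_n$ converges only in $L^1$ while $|\nabla\psi_n|^2$ is merely bounded in $L^1$, so the product $\int_\Omega(a_n-a_0)|\nabla\psi_n|^2$ cannot be controlled by a crude H\"older/dominated-convergence estimate. Testing the Euler--Lagrange equation with $\psi_n-\psi_*$ and exploiting the cancellation in the displayed identity is what makes the argument go through; everything else is dominated-convergence bookkeeping built on the uniform $L^\infty$ bound supplied by Lemma~\ref{Lem: GPBE estimate}.
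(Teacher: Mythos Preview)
Your argument is correct and is exactly what the paper intends: the authors give no proof of this lemma, deferring entirely to \cite[Propositions~3.1 and~3.2]{MR4523426}, the very reference you invoke for your compactness--Euler--Lagrange--identification scheme. One cosmetic remark: your appeals to ``dominated convergence'' for terms such as $\int_\Omega(a_n-a_0)|\nabla\psi_0|^2$ tacitly use that the $L^1$-convergence $a_n\to a_0$ yields a.e.\ convergence along a further subsequence (or convergence in measure), but this is harmless within your sub-subsequence framework and does not affect the conclusion.
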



\section{A comparison with GFBVISM}\label{Appendix B}

In this section, we will conduct a comparative analysis of the proposed models~\eqref{ensemble average total energy type II} and \eqref{ensemble average solvation energy-steric} with a closely related VISM known as the Geometric Flow-Based VISM (GFBVISM), which is rooted in classic PB theory.   GFBVISM  \cite{MR2719171}   is defined as follows:
\begin{align}\label{ensemble average total energy type I}
\mathcal{E}^{(2)}(u )  
= \Inp(u)+\Ip^{(2)}(u,\psi),
\end{align}
where
\begin{align}\label{ensemble average polar formulation 1}
\Ip^{(2)}(u,\psi)=  \int_\Omega   \left[    \rho \psi    - \frac{\epsilon(u) }{8\pi}|\nabla \psi|^2 - (1-u )  B(\psi)         \right]\, dx, 
\end{align}
and $\psi\in \cA$ solves
\begin{equation}
\label{GPBE1}
 \nabla \cdot [  \epsilon(u) \nabla \psi ] -4\pi(1-u)B'(\psi) + 4\pi\rho=0 \quad \text{in }\Omega.	
\end{equation}
Despite certain similarities between \eqref{ensemble average total energy type II} and \eqref{ensemble average solvation energy-steric} and GFBVISM, there exist fundamental distinctions between them.

First, \eqref{ensemble average total energy type I} was introduced in an {\em ad hoc} way to create a transition region between the solute and  solvent, lacking an explanation of the physical meanings of the transition parameter $u$ and the   predicted energy.

Second, it has come to attention that the original formulation of GFBVISM \cite{MR2719171} does not incorporate Constraints \eqref{constrain 1} and \eqref{constrain 2}. This absence of constraints introduces the possibility of nonphysical minimizers within the GFBVISM model. For instance, it may allow for trivial values of $u$ or even negative values, which are not physically meaningful in the given context. Therefore, including these constraints becomes crucial to ensure the model's solutions align with physical principles.

Third and most importantly, the proposed models correct and improve  the derivation of the ensemble average polar energy.  
Due to Proposition~\ref{Prop: bulk enegy}, one may guess that $\Ip^{(2)}$ approximates the ensemble average polar energy. 
However,   Formulation~\eqref{ensemble average polar formulation 1} is questionable in the sense that it is derived from an erroneous ``entropy" formulation.
To see this, on   a heuristic level,  one can ensemble average the entropy and obtain 
$$
-T \langle S \rangle = -T\sum_{k\in \cK}p_k S_k= \beta^{-1} \sum_{k\in \cK}p_k \int_\Omega  \sum\limits_{j=1}^{N_c} c_j^k  \left[  \ln(\sv c_j^k  ) -1  \right]\, dx \quad \text{and}\quad \langle H \rangle : = \langle E \rangle  -T \langle S \rangle,
$$
where $S_k$ is defined in \eqref{Def: entropy}.
By taking the first variations of $\langle H \rangle  $   with respect to all $c_j^k$, a similar argument to Remark~\ref{Rmk: PBE} gives \eqref{ion concentration and Boltzmann distribution}.
It follows from \eqref{mean ion con} that
\begin{equation}
\label{cj formulation 1}
c_j(x)=   \sum\limits_{k\in \cK} p_k c_j^k(x)= \sum\limits_{k\in \cK} p_k \chi_{U_k}(x) c_j^k(x)= (1-u(x))  c_j^\infty e^{-\beta q_j \psi (x)} .
\end{equation}
See \eqref{Def: entropy} for the definition of $S_k$.
Plugging \eqref{cj formulation 1} into \eqref{Poisson eq} yields \eqref{GPBE1}.
Using the expression~\eqref{cj formulation 1} of $c_j$ in $\langle H \rangle  $ and adding a constant term
$\beta^{-1} \sum_{j=1}^{N_c}   c_j^\infty\int_\Omega (1-u)        dx$ to adjust the state of zero energy as in \eqref{reference state ensemble} give the polar energy formulation \eqref{ensemble average polar formulation 1}.
From a statistical mechanics perspective, it is indeed important to note that the choice of the ``entropy" $\langle S \rangle$ and ``Helmholtz free energy" $\langle H \rangle$ in the previous derivation is incorrect.
Consequently, \eqref{ensemble average total energy type II} corrects the polar energy formulation within the classical PB theory framework, albeit introducing a discontinuous term 	$\chi_{\{u<1\}}$.
It is crucial to adhere to the correct statistical mechanics principles when formulating the ensemble averages.
Moreover, \eqref{ensemble average solvation energy-steric} further  refines the EASE formulation by accounting for the finite size   effects  and eliminate the discontinuous term $\chi_{\{u<1\}}$.


\section*{Funding Information}

This research was partially supported by the National Science Foundation under grants DMS-2306991 (Shao, Zhao), DMS-1818748 and DMS-2306992 (Chen), DMS-2110914 (Zhao), 
and a CARSCA grant from the University of Alabama (Shao).

\section*{Conflict of interest}

The authors declare that they have no known competing financial interests that could have appeared to influence the work reported in this paper. Although Dr. Shan Zhao is a Journal Editor of the Computational and Mathematical Biophysics, he had no involvement in the peer reviewing process or the final decision of this manuscript.

\section*{Ethics Statement}

This research does not require ethical approval.

\section*{Data availability statement}

Data sharing is not applicable to this article as no data sets were generated or
analysed during the current study.


\bibliographystyle{plain}
\bibliography{refsshao}

\begin{thebibliography}{10}

\bibitem{Baker:2005}
N.~A. Baker.
\newblock Improving implicit solvent simulations: a {Poisson}-centric view.
\newblock {\em Current Opinion in Structural Biology}, 15(2):137--43, 2005.

\bibitem{Baker10037}
N.~A. Baker, D.~Sept, S.~Joseph, M.~J. Holst, and J.~A. McCammon.
\newblock Electrostatics of nanosystems: Application to microtubules and the
  ribosome.
\newblock {\em Proceedings of the National Academy of Sciences},
  98(18):10037--10041, 2001.

\bibitem{Ball2003}
P.~Ball.
\newblock How to keep dry in water.
\newblock {\em Nature}, 423(6935):25--26, May 2003.

\bibitem{Bates:2008}
P.~W. Bates, G.~W. Wei, and S.~Zhao.
\newblock Minimal molecular surfaces and their applications.
\newblock {\em Journal of Computational Chemistry}, 29(3):380--91, 2008.

\bibitem{doi:10.1080/14786444208520813}
J.J. Bikerman.
\newblock Xxxix. structure and capacity of electrical double layer.
\newblock {\em The London, Edinburgh, and Dublin Philosophical Magazine and
  Journal of Science}, 33(220):384--397, 1942.

\bibitem{BORUKHOV2000221}
I.~Borukhov, D.~Andelman, and H.~Orland.
\newblock Adsorption of large ions from an electrolyte solution: a modified
  poisson–boltzmann equation.
\newblock {\em Electrochimica Acta}, 46(2):221--229, 2000.

\bibitem{PhysRevLett.79.435}
Itamar Borukhov, David Andelman, and Henri Orland.
\newblock Steric effects in electrolytes: A modified poisson-boltzmann
  equation.
\newblock {\em Phys. Rev. Lett.}, 79:435--438, Jul 1997.

\bibitem{Boschitsch:2004}
A.~H. Boschitsch and M.~O. Fenley.
\newblock Hybrid boundary element and finite difference method for solving the
  nonlinear {{Poisson-Boltzmann}} equation.
\newblock {\em Journal of Computational Chemistry}, 25(7):935--955, 2004.

\bibitem{BOTELLOSMITH2013274}
W.~M. Botello-Smith, X.~Liu, Q.~Cai, Z.~Li, H.~Zhao, and R.~Luo.
\newblock Numerical poisson–boltzmann model for continuum membrane systems.
\newblock {\em Chemical Physics Letters}, 555:274--281, 2013.

\bibitem{MR2345913}
G.~Carlier and M.~Comte.
\newblock On a weighted total variation minimization problem.
\newblock {\em J. Funct. Anal.}, 250(1):214--226, 2007.

\bibitem{doi:10.1021/jp7101012}
J.~Che, J.~Dzubiella, B.~Li, and J.~A. McCammon.
\newblock Electrostatic free energy and its variations in implicit solvent
  models.
\newblock {\em The Journal of Physical Chemistry B}, 112(10):3058--3069, 2008.
\newblock PMID: 18275182.

\bibitem{CChen:2009}
C.~J. Chen, R.~Saxena, and G.~W. Wei.
\newblock Differential geometry based multiscale models for virus formation and
  evolution.
\newblock {\em Int. J. Biomed. Imaging}, 2010(308627), 2010.

\bibitem{MR2719171}
Z.~Chen, N.~A. Baker, and G.~W. Wei.
\newblock Differential geometry based solvation model {I}: {E}ulerian
  formulation.
\newblock {\em J. Comput. Phys.}, 229(22):8231--8258, 2010.

\bibitem{MR4523426}
Z.~Chen and Y.~Shao.
\newblock A new approach to constrained total variation solvation models and
  the study of solute-solvent interface profiles.
\newblock {\em Comput. Math. Appl.}, 130:119--136, 2023.

\bibitem{doi:10.1063/1.4745084}
Z.~Chen, S.~Zhao, J.~Chun, D.~G. Thomas, N.~A. Baker, P.~W. Bates, and G.~W.
  Wei.
\newblock Variational approach for nonpolar solvation analysis.
\newblock {\em The Journal of Chemical Physics}, 137(8):084101, 2012.

\bibitem{Cheng:2007e}
L.~T. Cheng, J.~Dzubiella, A.~J. McCammon, and B.~Li.
\newblock Application of the level-set method to the implicit solvation of
  nonpolar molecules.
\newblock {\em Journal of Chemical Physics}, 127(8), 2007.

\bibitem{Choullielliptic}
Mourad Choulli.
\newblock Uniqueness of continuation for semilinear elliptic equations.
\newblock {\em arXiv:2208.08378}.

\bibitem{Crowley:2005}
P.~B. Crowley and A.~Golovin.
\newblock Cation-pi interactions in protein-protein interfaces.
\newblock {\em Proteins: Structure, Function, and Bioinformatics},
  59(2):231--239, 2005.

\bibitem{MR3740372}
S.~Dai, B.~Li, and J.~Lu.
\newblock Convergence of phase-field free energy and boundary force for
  molecular solvation.
\newblock {\em Arch. Ration. Mech. Anal.}, 227(1):105--147, 2018.

\bibitem{Dong:2006}
F.~Dong and H.~X. Zhou.
\newblock Electrostatic contribution to the binding stability of
  protein-protein complexes.
\newblock {\em Proteins}, 65(1):87--102, 2006.

\bibitem{Dragan:2004}
A.~I. Dragan, C.~M. Read, E.~N. Makeyeva, E.~I. Milgotina, M.~E. Churchill,
  C.~Crane-Robinson, and P.~L. Privalov.
\newblock {DNA} binding and bending by {HMG} boxes: Energetic determinants of
  specificity.
\newblock {\em Journal of Molecular Biology}, 343(2):371--393, 2004.

\bibitem{PhysRevLett.96.087802}
J.~Dzubiella, J.~M.~J. Swanson, and J.~A. McCammon.
\newblock Coupling hydrophobicity, dispersion, and electrostatics in continuum
  solvent models.
\newblock {\em Phys. Rev. Lett.}, 96:087802, Mar 2006.

\bibitem{doi:10.1063/1.2171192}
J.~Dzubiella, J.~M.~J. Swanson, and J.~A. McCammon.
\newblock Coupling nonpolar and polar solvation free energies in implicit
  solvent models.
\newblock {\em The Journal of Chemical Physics}, 124(8):084905, 2006.

\bibitem{MR1158660}
L.~C. Evans and R.~F. Gariepy.
\newblock {\em Measure theory and fine properties of functions}.
\newblock Studies in Advanced Mathematics. CRC Press, Boca Raton, FL, 1992.

\bibitem{Feig:2004b}
M.~Feig and C.~L. Brooks~III.
\newblock Recent advances in the development and application of implicit
  solvent models in biomolecule simulations.
\newblock {\em Curr Opin Struct Biol.}, 14:217 -- 224, 2004.

\bibitem{FOGOLARI1997135}
F.~Fogolari and J.~M. Briggs.
\newblock On the variational approach to poisson–boltzmann free energies.
\newblock {\em Chemical Physics Letters}, 281(1):135--139, 1997.

\bibitem{Fogolari:2002}
F.~Fogolari, A.~Brigo, and H.~Molinari.
\newblock The {{Poisson-Boltzmann}} equation for biomolecular electrostatics: a
  tool for structural biology.
\newblock {\em Journal of Molecular Recognition}, 15(6):377--92, 2002.

\bibitem{MR737190}
D.~Gilbarg and N.~S. Trudinger.
\newblock {\em Elliptic partial differential equations of second order}, volume
  224 of {\em Grundlehren der Mathematischen Wissenschaften [Fundamental
  Principles of Mathematical Sciences]}.
\newblock Springer-Verlag, Berlin, second edition, 1983.

\bibitem{Grant:2007}
J.~A. Grant, B.~T. Pickup, M.~T. Sykes, C.~A. Kitchen, and A.~Nicholls.
\newblock The {Gaussian Generalized Born} model: application to small
  molecules.
\newblock {\em Physical Chemistry Chemical Physics}, 9:4913--22, 2007.

\bibitem{Grochowski:2007}
P.~Grochowski and J.~Trylska.
\newblock Continuum molecular electrostatics, salt effects and counterion
  binding. a review of the {{Poisson-Boltzmann}} theory and its modifications.
\newblock {\em Biopolymers}, 89(2):93--113, 2007.

\bibitem{MR4218921}
Tzyy-Leng Horng.
\newblock Review and modification of entropy modeling for steric effects in the
  {P}oisson-{B}oltzmann equation.
\newblock {\em Entropy}, 22(6):Paper No. 632, 15, 2020.

\bibitem{PhysRevE.75.021502}
Mustafa~Sabri Kilic, Martin~Z. Bazant, and Armand Ajdari.
\newblock Steric effects in the dynamics of electrolytes at large applied
  voltages. i. double-layer charging.
\newblock {\em Phys. Rev. E}, 75:021502, Feb 2007.

\bibitem{PhysRevE.75.021503}
Mustafa~Sabri Kilic, Martin~Z. Bazant, and Armand Ajdari.
\newblock Steric effects in the dynamics of electrolytes at large applied
  voltages. ii. modified poisson-nernst-planck equations.
\newblock {\em Phys. Rev. E}, 75:021503, Feb 2007.

\bibitem{Kuhn:1992}
L.~A. Kuhn, M.~A. Siani, M.~E. Pique, C.~L. Fisher, E.~D. Getzoff, and J.~A.
  Tainer.
\newblock The interdependence of protein surface topography and bound water
  molecules revealed by surface accessibility and fractal density measures.
\newblock {\em Journal of Molecular Biology}, 228(1):13--22, 1992.

\bibitem{Lamm:2003}
G.~Lamm.
\newblock The {{Poisson-Boltzmann}} equation.
\newblock In K.~B. Lipkowitz, R.~Larter, and T.~R. Cundari, editors, {\em
  Reviews in Computational Chemistry}, pages 147--366. John Wiley and Sons,
  Inc., Hoboken, N.J., 2003.

\bibitem{MR4160135}
J.~LeCrone, Y.~Shao, and G.~Simonett.
\newblock The surface diffusion and the {W}illmore flow for uniformly regular
  hypersurfaces.
\newblock {\em Discrete Contin. Dyn. Syst. Ser. S}, 13(12):3503--3524, 2020.

\bibitem{doi:10.1137/080712350}
B.~Li.
\newblock Minimization of electrostatic free energy and the poisson–boltzmann
  equation for molecular solvation with implicit solvent.
\newblock {\em SIAM Journal on Mathematical Analysis}, 40(6):2536--2566, 2009.

\bibitem{MR3396402}
B.~Li and Y.~Liu.
\newblock Diffused solute-solvent interface with {P}oisson-{B}oltzmann
  electrostatics: free-energy variation and sharp-interface limit.
\newblock {\em SIAM J. Appl. Math.}, 75(5):2072--2092, 2015.

\bibitem{https://doi.org/10.1002/jcc.23033}
C.~Li, L.~Li, J.~Zhang, and E.~Alexov.
\newblock Highly efficient and exact method for parallelization of grid-based
  algorithms and its implementation in delphi.
\newblock {\em Journal of Computational Chemistry}, 33(24):1960--1966, 2012.

\bibitem{doi:10.1021/ct400065j}
L.~Li, C.~Li, Z.~Zhang, and E.~Alexov.
\newblock On the dielectric “constant” of proteins: Smooth dielectric
  function for macromolecular modeling and its implementation in delphi.
\newblock {\em Journal of Chemical Theory and Computation}, 9(4):2126--2136,
  2013.
\newblock PMID: 23585741.

\bibitem{doi:10.1021/jp0764384}
D.~L. Mobley, K.~A. Dill, and J.~D. Chodera.
\newblock Treating entropy and conformational changes in implicit solvent
  simulations of small molecules.
\newblock {\em The Journal of Physical Chemistry B}, 112(3):938--946, 2008.
\newblock PMID: 18171044.

\bibitem{Mongan:2007}
J.~Mongan, C.~Simmerling, J.~A. McCammon, D.~A. Case, and A.~Onufriev.
\newblock Generalized {Born} model with a simple, robust molecular volume
  correction.
\newblock {\em Journal of Chemical Theory and Computation}, 3(1):159--69, 2007.

\bibitem{MR3524106}
Jan Pr\"{u}ss and Gieri Simonett.
\newblock {\em Moving interfaces and quasilinear parabolic evolution
  equations}, volume 105 of {\em Monographs in Mathematics}.
\newblock Birkh\"{a}user/Springer, [Cham], 2016.

\bibitem{Sharp1990}
K.~A. Sharp and B.~Honig.
\newblock Calculating total electrostatic energies with the nonlinear
  poisson-boltzmann equation.
\newblock {\em The Journal of Physical Chemistry}, 94(19):7684--7692, Sep 1990.

\bibitem{Spolar:1989}
R.~S. Spolar, J.~H. Ha, and M.~T. Record~Jr.
\newblock Hydrophobic effect in protein folding and other noncovalent processes
  involving proteins.
\newblock {\em Proceedings of the National Academy of Sciences of the United
  States of America}, 86(21):8382--8385, 1989.

\bibitem{Swanson:2005a}
J.~M.~J. Swanson, J.~Mongan, and J.~A. McCammon.
\newblock Limitations of atom-centered dielectric functions in implicit solvent
  models.
\newblock {\em Journal of Physical Chemistry B}, 109(31):14769--72, 2005.

\bibitem{Tjong:2007b}
H.~Tjong and H.~X. Zhou.
\newblock {GBr6NL}: A generalized {Born} method for accurately reproducing
  solvation energy of the nonlinear {{Poisson-Boltzmann}} equation.
\newblock {\em Journal of Chemical Physics}, 126:195102, 2007.

\bibitem{Wagoner:2006}
J.~A. Wagoner and N.~A. Baker.
\newblock Assessing implicit models for nonpolar mean solvation forces: the
  importance of dispersion and volume terms.
\newblock {\em Proceedings of the National Academy of Sciences of the United
  States of America}, 103(22):8331--6, 2006.

\bibitem{Wang_2015_Parameter}
B.~Wang and G.~W. Wei.
\newblock Parameter optimization in differential geometry based solvation
  models.
\newblock {\em The Journal of chemical physics}, 143(13):134119, 2015.

\bibitem{Wei2010}
G.~W. Wei.
\newblock Differential geometry based multiscale models.
\newblock {\em Bulletin of Mathematical Biology}, 72(6):1562--1622, Aug 2010.

\bibitem{Wei_2016_differential}
G.~W. Wei and N.~A Baker.
\newblock Differential geometry-based solvation and electrolyte transport
  models for biomolecular modeling: a review.
\newblock In {\em Many-Body Effects and Electrostatics in Biomolecules}, pages
  435--480. Jenny Stanford Publishing, 2016.

\bibitem{Wei:2005}
G.~W. Wei, Y.~H. Sun, Y.~C. Zhou, and M.~Feig.
\newblock Molecular multiresolution surfaces.
\newblock {\em arXiv:math-ph/0511001v1}, pages 1 -- 11, 2005.

\bibitem{Zhao_2011_Pseudo}
S.~Zhao.
\newblock Pseudo-time-coupled nonlinear models for biomolecular surface
  representation and solvation analysis.
\newblock {\em International Journal for Numerical Methods in Biomedical
  Engineering}, 27(12):1964--1981, 2011.

\bibitem{Zhao_2013_Phase}
Y.~Zhao, Y.~Kwan, J.~Che, B.~Li, and J.~A. McCammon.
\newblock Phase-field approach to implicit solvation of biomolecules with
  coulomb-field approximation.
\newblock {\em The Journal of chemical physics}, 139(2):024111, 2013.

\bibitem{Zhou2020curvature}
Y.~Zhou.
\newblock On curvature driven rotational diffusion of proteins on membrane
  surfaces.
\newblock {\em SIAM Journal on Applied Mathematics}, 80(1):359--381, 2020.

\end{thebibliography}

\end{document}